\let\OLDthebibliography\thebibliography
\renewcommand\thebibliography[1]{
  \OLDthebibliography{#1}
  \setlength{\parskip}{0pt}
  \setlength{\itemsep}{0pt plus 0.3ex}
}
\def\nb0{{\mathbf{0}}}
\def\nb1{{\mathbf{1}}}
\newtheorem{lemma}{Lemma}
\newtheorem{thm}{Theorem}
\newtheorem{prop}{Proposition}
\newtheorem{cor}{Corollary}
\newcommand{\subparagraph}{} 
\begin{document}

\author{\IEEEauthorblockN{ \textbf{Ashirwad Sinha}$^{1}$, \textbf{Shubhransh Singhvi}$^{1}$, \textbf{Praful D. Mankar}$^{1}$, \textbf{ Harpreet S. Dhillon}$^{2}$\\}
  \IEEEauthorblockA{$^1$%
  Signal Processing  \&  Communications Research  Center, IIIT Hyderabad, India}
\IEEEauthorblockA{$^2$%
                     Bradley Department of Electrical and Computer Engineering, Virginia Tech, Blacksburg, VA, 24061, USA}
 }
 
\title{Peak Age of Information under Tandem of Queues}
\date{\today}
\maketitle
\thispagestyle{empty}	
\pagestyle{empty}

\begin{abstract}
This paper considers a communication system where a source sends time-sensitive information to its destination via queues in tandem. We assume that the arrival process as well as the service process (of each server) are memoryless, and each of the servers has no buffer. For this setup, we develop a recursive framework to characterize the mean peak age of information (PAoI) under {\em preemptive} and {\em non-preemptive} policies with $N$ servers having different service rates. For the preemptive case, the proposed framework also allows to obtain mean age of information (AoI). 

\end{abstract}
\section{Introduction}
Transfer of time-sensitive information is a vital aspect in many use cases of modern communication networks, such as healthcare, remote actuation, and automated systems. It is often crucial for a destination node  to receive up-to-date information about a physical process observed by  the source node located somewhere else in the network. However, traditional metrics like end-to-end delay may not be sufficient to characterize the performance of such applications, as they do not indicate the time at which the received messages/updates were sampled by the source. To overcome this, a new metric called the {\em age of information} (AoI) has gained popularity in recent years for measuring the freshness of updates received at the destination. The statistical properties of AoI have been extensively studied in the literature for a variety of network settings mostly involving single-hop communication link. However, the information transfer between two nodes often needs to be navigated through multiple gateways placed in the core network along the end-to-end communication link. As these gateways are also responsible for scheduling the traffic of other links,  they may cause random delays in forwarding the messages corresponding to the intended destination node, which in turn affects the  AoI performance.  
Inspired by this, we aim to characterize the AoI  in a multi-hop communication. While doing so, we develop a new recursive approach that offers an appealing alternative to existing approaches that are computationally complex for large settings, as discussed shortly. 

{\em Related works:}
Since the introduction of the AoI metric in \cite{SanjitKaul_2012}, a substantial amount of research has been conducted to characterize and optimize age performance in a variety of communication networks. The mean and distribution of AoI as well as of its variant, known as peak AoI (PAoI), have been extensively analyzed for various queuing systems, for examples see  \cite{Costa_2016,Mohamed_TIT,Soljanin_2017,Kam_2016,Yoshiaki_2019,sinha2022age} for single source case and \cite{Yates_2019,Yates_2018_Networks,Najm_2018,Sun_2018,Mohamed_SHS,Yates_2012,Singhvi_2023} for multi-source  case. The interested readers can refer to \cite{Roy_Survey} and \cite{pappas_abdelmagid_zhou_saad_dhillon_2023} for an excellent  survey on the mainstream  analyses of AoI. In particular, the authors of \cite{Yates_2019} have developed an analytical approach based on the Stochastic Hybrid Systems (SHS) \cite{hespanha2004stochastic} for determining the moments of  age under a system described by finite-state continuous-time Markov chain. The SHS-model has become a main facilitator for  deriving the mean AoI and moment generating function of AoI for complex systems. \\
\indent One of the research directions in the AoI analysis that has received relatively less attention is the characterization of age in multi-hop network settings, which is the focus of this paper. 
The authors of \cite{Talak_2017_multihop} derived the age-optimal scheduling policies for multi-hop systems involving multiple interfering source-destination pairs.
The authors of \cite{Bedewy_2017,Bedewy_2019} investigated the performance of age in multi-hop networks and demonstrated that the age is minimized under a preemptive policy when the multi-hop link  service times follow exponential processes. However, for the case of general distributions of those service times, the authors show that a non-preemptive policy is age-optimal. Further, in \cite{abd2023distribution}, the distribution of age for the gossip network is derived using SHS model.   
Furthermore, by a careful application of SHS-model, the authors of \cite{Yates_2018_Networks, Yates_2020_networks} have derived a closed-form expression of the mean AoI for systems consisting of $N$ {\em preemptive queues} in tandem. Conversely, the authors of \cite{Kam_2018, kam_Frontiers} applied SHS to derive closed-form expressions of the mean AoI under {\em non-preemptive queues} in tandem for $N = 2$ (with different service rates) and $N=3$ (with equal service rates). It was shown that  the SHS model leads to  an intractable set of linear equations  for $N>3$ case.
\emph{To the best of our knowledge, the analysis of mean PAoI for preemptive policy and mean AoI/PAoI for the non-preemptive policy is not present in the literature for the general case of $N$.} 

Further, using SHS model, the authors of \cite{Doncel_2021}   derived a closed-form expression for an upper bound of the mean AoI under  Jackson Networks with finite buffer size.
It is worth noting that the  SHS model can be computationally expensive to analyze or simulate, especially when dealing with large state spaces or systems with complex dynamics/interactions, which makes it challenging to obtain efficient and scalable solutions. Further, it often requires certain simplifying assumptions, such as Markovian  state  space, to make the analysis tractable. Inspired by such limitations of the existing approaches, we develop  a new recursive framework to evaluate the mean performance of PAoI for preemptive and non-preemptive queues in tandem. Our framework also allows to obtain mean AoI for preemptive queues in tandem.

{\em Contributions:}
 This paper focuses on analyzing the mean PAoI under the tandem of queues while assuming memoryless processes for the arrival process and the service process of each server with different rates. For a system with $N$ unit capacity queues, we develop a new recursive analytical framework to evaluate the mean PAoI under the preemptive and non-preemptive policies. Using the recursive framework, we also evaluate the mean AoI for the preemptive policy which was first characterized in \cite{Yates_2018_Networks}. The proposed framework can be utilized to analyze the distribution of PAoI. 
 Overall, it provides an appealing alternative to the existing approaches, such as SHS, that are computationally complex in large-scale settings.

 Our analysis provides useful insights into how the service processes of servers in tandem and how their disciplines impact the  performance of age.\vspace{-1mm}
\section{System Model}\vspace{-1mm}
\label{section:system_model}
We consider a status update system with $N$ servers between the source and destination. Each server is assumed to have a unit capacity and is arranged in tandem as shown in Fig \ref{fig:server_diagram}. The source is assumed to provide time-sensitive updates to the first server regarding some physical process, for which the age needs to be measured at the destination. 
The updates from the source are generated according to a Poisson process with an arrival rate  $\lambda$ and the service times of the $i$-th server are assumed to follow an exponential distribution with parameter $\mu_i$. 
The source can be interpreted as a virtual server that emits/serves the updates according to an exponential inter-departure time at rate $\mu_0=\lambda$. We consider {\em preemptive} and {\em non-preemptive} queuing policies. In the preemptive policy, the in-service updates at each server are replaced with newly arriving updates. On the other hand, in the non-preemptive, newly arriving updates are dropped at each server if it is busy. 
\begin{figure}[h!]
\centering\vspace{-2mm}
\includegraphics[trim={4cm 9.3cm 4cm 15.6cm},clip,width=.45\textwidth]{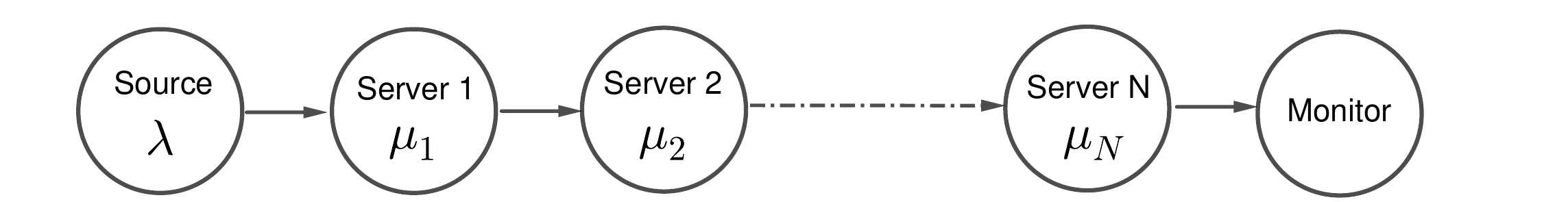} \vspace{-3mm}
    \caption{System model with $N$ queues in tandem. }
    \label{fig:server_diagram}\vspace{-3mm}
\end{figure}

This paper focuses on characterizing  age of updates received by the destination through the tandem of queues, as described above. The AoI at the destination is defined as  
\begin{equation}
    \label{delta(t)}
    \delta(t)=t-U(t),
\end{equation}
where $U(t)$ is the generation instant of the freshest update received by the destination. Fig. \ref{fig:age_sample_path} shows a sample path of AoI under tandem of preemptive queues. As shown in the figure, $\tau_k$ and $\tau^{\prime}_{k,l}$ represent the time of arrival at the first server  and time of departure from the $l$-th server, respectively, of the $k$-th update. Let  $T_k$ = $\tau^{\prime}_{k,n}-\tau_{k}$  be the service time  of the $k$-th update and let $Y_{k} = \tau^{\prime}_{k,n}-\tau^{\prime}_{k-1,n}$ be the inter-departure time between the $k$-th and the $(k-1)$-th updates.
\begin{figure}[h!]
\centering
\includegraphics[trim={1.5cm 3.6cm 1.2cm 4cm},clip,width=.45\textwidth]{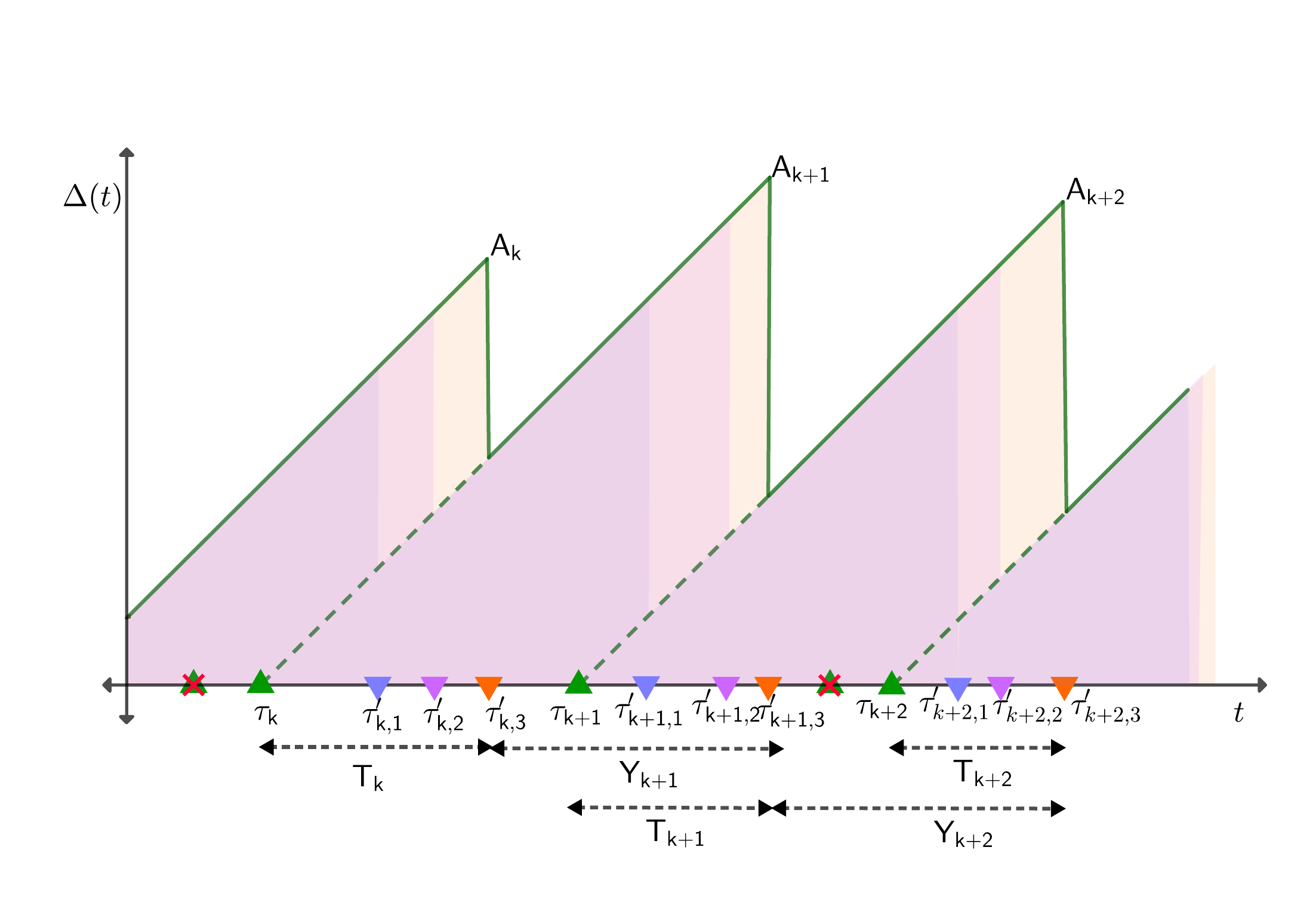} \vspace{-.2cm}
    \caption{Illustration of the sample path of age $\delta(t)$ for $N=3$ under preemptive policy. The green up arrow markers indicate  the arrival instants of updates at server 1 whereas light-blue, magenta, and orange down arrow markers indicate the service instants of the updates from the first, second, and third servers, respectively. The red cross markers represent the  older updates replaced with newer ones.  The three different  shades represent the AoIs observed at the output of  three servers.}
    \label{fig:age_sample_path}
\end{figure}

The mean AoI is defined as the time mean of the age process $\delta(t)$, whereas the mean PAoI is defined as the mean of  $\delta(t)$ observed just before the delivery of updates. The means of  PAoI and  AoI are given by \cite[Equations (8) and (10)]{Costa_2016}
\begin{align}
    \bar{\mathcal{A}}&=\mathbb{E}[Y_k]+\mathbb{E}[T_{k-1}],
    \label{eq:mean_peak_age_defination}\\
    \text{and~}\Delta &= 0.5\frac{\mathbbm{E}[Y_k^2]}{\mathbbm{E}[Y_k]}+\frac{\mathbbm{E}[Y_kT_{k-1}]}{\mathbbm{E}[Y_k]},\label{eq:mean_age_defination}
\end{align}

\section{Age under Preemptive Queues in Tandem}
\label{section:Age_under_Preemptive_Queues_in_Tandem}
In this section, we analyze the age for the case of
preemptive queues in tandem. Let $\psi_i$ represent the event where $i$-th server is busy and all the servers $j>i$ are idle at the time of delivery of the $(k-1)$-th update to the monitor. The analysis of the moments of $T_k$ and $Y_k$ given $\psi_i$ relies on  the successful delivery of the $k$-th update to the monitor. Let us call  this packet as the {\em packet of interest}. In the following, we first analyze the probability of event $\psi_i$. There can be a total of $2^N$ possible states wherein each server can be idle or busy. But for the event $\psi_i$, the state of server $j<i$ is irrelevant as the transition in server $j<i$ will not alter $\psi_i$.  Keeping this in mind, we define our state as $(a,b)$ such that $a>b$ where $a$ is  an index of the server that holds the packet of interest and  $b$ is the index of the server before $a$ that is busy (meaning the servers with index $j$ are idle for $b<j<a$).
The reason for defining the state in such a way is that the only packet in server $b$ can potentially preempt the packet of interest in server $a$. This is because the preemption of the packet in server $b$ will not affect the state $(a,b)$ since the distribution of time required for transition at server $b$ remains unaltered after the preemption. 

Note that upon the delivery of the packet of interest to the monitor from state $(N,b')$, the system enters in state $(N+1,b')$ as the monitor can be interpreted as another server indexed by $N+1$, which acts like a \emph{sink node} that consumes the packets and does not forward them further, i.e., the inter-departure rate at the monitor is $\mu_{N+1} = 0$. Furthermore, the system enters in state $(1,0)$ upon the arrival of the packet of interest at server 1, as the virtual server (i.e., source), indexed by 0, is always busy (since it generates update with inter-departure at rate $\mu_0=\lambda$). Hence, the analysis of $T_k$, $Y_k$, and $\psi_i$ depends on the condition that the packet of interest in state $(a,b)$ is not preempted until it's delivered to the monitor from state $(N,b^\prime)$ for $\forall$  $b^\prime = b, \ldots,N-1$.

In order to derive the probability of $\psi_i$, we first derive the probability of reaching state $(a^\prime,b^\prime)$ from state $(a,b)$.

\begin{lemma}
\label{lemma:one_step_transition_probability}
    The transition probability from state $(a,b)$ to $(a+1,b)$ and $(a,b+1)$ is equal to $\frac{\mu_a}{\mu_a+\mu_b}$ and $\frac{\mu_b}{\mu_a+\mu_b}$, respectively. 
\end{lemma}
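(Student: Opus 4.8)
The plan is to analyze the competition between the two relevant exponential clocks in state $(a,b)$. In this state, server $a$ holds the packet of interest and server $b$ is the closest busy server preceding $a$; all servers strictly between $b$ and $a$ are idle. By the memorylessness of the exponential service times, the only two events that can change the state $(a,b)$ in the relevant sense are: (i) server $a$ completes service, forwarding the packet of interest to server $a+1$ — which is idle, since either $a+1 < $ (the next busy index) or $a+1 = N+1$ is the sink — and (ii) server $b$ completes service, forwarding its packet to server $b+1$, which is idle by the definition of the state. Transitions at servers with index $j < b$ are irrelevant to $(a,b)$ (as argued in the text preceding the lemma), and a completion at any idle server is impossible, so no other transition is possible.

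First I would record that the residual service times at servers $a$ and $b$ are, by the memoryless property, independent exponential random variables with rates $\mu_a$ and $\mu_b$ respectively, regardless of how long each has already been in service. The next state-changing event is whichever of these two clocks rings first. By the standard fact that for independent $E_a \sim \mathrm{Exp}(\mu_a)$ and $E_b \sim \mathrm{Exp}(\mu_b)$ one has $\mathbb{P}[E_a < E_b] = \tfrac{\mu_a}{\mu_a + \mu_b}$, the transition $(a,b) \to (a+1,b)$ (server $a$ fires first) occurs with probability $\tfrac{\mu_a}{\mu_a+\mu_b}$, and the transition $(a,b)\to(a,b+1)$ (server $b$ fires first) occurs with probability $\tfrac{\mu_b}{\mu_a+\mu_b}$. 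Since these are the only two possibilities, the two probabilities sum to one, as they should.

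One point I would be careful to justify is that after the transition the new state is indeed the claimed one and that conditioning on ``the packet of interest is not preempted'' does not distort these probabilities. If server $a$ fires first, the packet of interest moves to the idle server $a+1$, which becomes busy, while server $b$ is still the nearest busy predecessor, giving $(a+1,b)$. If server $b$ fires first, its packet moves to the idle server $b+1$; now $b+1$ is busy and is the nearest busy server before $a$ (the servers between $b+1$ and $a$ remain idle), giving $(a,b+1)$; crucially the residual service time of the packet of interest at server $a$ is unaffected by this, precisely the observation made in the paragraph before the lemma. The main (mild) obstacle is the bookkeeping that the event ``packet of interest not yet preempted'' is automatically in force throughout this one step — preemption of the packet of interest would require a packet to arrive at server $a$ from server $a-1$, but $a-1$ is idle in state $(a,b)$ unless $a-1=b$, and even then it is server $b$'s completion that we are already accounting for; so within a single step from $(a,b)$ the conditioning is vacuous and the one-step probabilities are exactly as stated.
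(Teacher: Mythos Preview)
Your proposal is correct and follows essentially the same approach as the paper: both reduce the one-step transition to the race between the two independent exponential service times $S_a$ and $S_b$, obtaining $\mathbb{P}(S_a<S_b)=\frac{\mu_a}{\mu_a+\mu_b}$ and $\mathbb{P}(S_b<S_a)=\frac{\mu_b}{\mu_a+\mu_b}$. Your version simply adds more surrounding justification (memorylessness of residuals, correctness of the new state label, and why the non-preemption conditioning is vacuous on a single step) that the paper leaves implicit.
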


\begin{proof}
    Let  $S_k$ be the service time of $k$-th server. The probability of reaching state $(a+1,b)$ from state $(a,b)$ can be simply obtained as $ P(S_a<S_b) = \frac{\mu_a}{\mu_a + \mu_b}$. Similary, the probability of reaching state $(a,b+1)$ from state $(a,b)$ can be obtained as $ P(S_b<S_a) = \frac{\mu_b}{\mu_a + \mu_b}$.
\end{proof}

We denote the \emph{reach probability} from state $(a,b)$ to $(a',b')$ as ${\rm P}(a,b,a^\prime,b^\prime)$, which we obtain in the following lemma. 
\begin{lemma}
\label{lemma:reach_probability}
Reach probability can be obtained recursively as
\begin{align}
\label{eq:reach_probability_recursion}
\begin{split}
        {\rm P}(a,b,a^\prime,b^\prime) =& \frac{\mu_{a}}{\mu_{a}+\mu_{b}}  {\rm P}(a+1,b,a^\prime,b^\prime) \\
    &+ \frac{\mu_{b}}{\mu_{a}+\mu_{b}}{\rm P}(a,b+1,a^\prime,b^\prime),
\end{split}
\end{align}
with the base conditions
\begin{align}
{\rm P}(a,b,a^\prime,b^\prime) = 
\begin{cases}
1, & \text{if~} a=a^\prime \text{~and~} b=b^\prime,\\
0, & \text{if~} a>a^\prime  \text{~or~} b>b^\prime \text{~or~}a\leq b.
\end{cases}
\end{align}

\end{lemma}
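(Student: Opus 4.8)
The plan is to read $P(a,b,a',b')$ as the probability that the embedded jump chain on the state space $\{(a,b):a>b\}$, with one-step transition probabilities supplied by Lemma~\ref{lemma:one_step_transition_probability}, ever visits the target $(a',b')$ when started from $(a,b)$, and then to derive \eqref{eq:reach_probability_recursion} by a standard first-step analysis. Concretely, fix a state $(a,b)$ with $a>b$ and $(a,b)\neq(a',b')$; because the residual service times are exponential, they are memoryless, so the strong Markov property applies at the first jump. That first jump is, by Lemma~\ref{lemma:one_step_transition_probability}, to $(a+1,b)$ with probability $\frac{\mu_a}{\mu_a+\mu_b}$ and to $(a,b+1)$ with probability $\frac{\mu_b}{\mu_a+\mu_b}$; conditioning on which of these mutually exclusive events occurs and using that, from the new state, the target is reached with probability $P(a+1,b,a',b')$ respectively $P(a,b+1,a',b')$, yields exactly \eqref{eq:reach_probability_recursion}.

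Next I would check the base conditions. If $a=a'$ and $b=b'$ the chain already occupies the target, so the hitting probability is $1$. For the vanishing cases the crucial structural fact is monotonicity: every transition in Lemma~\ref{lemma:one_step_transition_probability} fixes one coordinate and increases the other by one, hence along any trajectory both coordinates are non-decreasing. Therefore if $a>a'$ or $b>b'$ the relevant coordinate has already overshot the target and can never return to it, giving $P(a,b,a',b')=0$. Finally, a pair with $a\le b$ is not a legitimate state of the chain: the only way the recursion can generate such a pair is by incrementing $b$ until $b+1=a$, which physically means the packet trailing in server $b$ has reached the server $a$ holding the packet of interest and preempted it, so delivery to the monitor — and thus any later visit to a valid target state — is impossible; we therefore set $P=0$ there.

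It then remains to observe that the recursion is well-founded, so the three base cases exhaust the expansion: each use of \eqref{eq:reach_probability_recursion} increases $a+b$ by one while $a'+b'$ stays fixed, so the quantity $a'+b'-a-b$ strictly decreases and after finitely many steps every branch terminates in one of the base conditions (we hit $(a',b')$, a coordinate exceeds that of the target, or $b$ catches up to $a$). I expect the only mildly delicate point to be the justification of the $a\le b$ boundary — arguing that the configuration in which the trailing busy server index meets the server carrying the packet of interest is genuinely a ``dead'' configuration — rather than the first-step decomposition itself, which is routine once Lemma~\ref{lemma:one_step_transition_probability} is available.
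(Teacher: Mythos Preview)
Your proposal is correct and follows essentially the same approach as the paper: a first-step decomposition using the one-step transition probabilities from Lemma~\ref{lemma:one_step_transition_probability}, together with the same justification of the base cases. Your write-up is in fact more careful than the paper's, which simply asserts the base conditions ``by construction'' and does not discuss the Markov property or the well-foundedness of the recursion; the physical reading of the $a\le b$ boundary as preemption of the packet of interest is a nice addition.
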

\begin{proof}
If  $(a,b)=(a^\prime,b^\prime)$ then our system is already in the final state, and hence ${\rm P}(a,b,a^\prime,b^\prime)=1$. In addition, we have ${\rm P}(a,b,a^\prime,b^\prime)=0$  if  $a> a^\prime$ or $b>b^\prime$ or $a\leq b$ as this transition is not possible by the construction of the state.  

To reach state $(a',b')$ from $(a,b)$ the system must go through either state $(a+1,b)$ or $(a,b+1)$. Therefore, 
\begin{align*}
     {\rm P}(a,b,a^\prime,b^\prime) &= {\rm P}(a,b,a+1,b)\cdot{\rm P}(a+1,b,a^\prime,b^\prime)\\
     &+ {\rm P}(a,b,a,b+1)\cdot{\rm P}(a,b+1,a^\prime,b^\prime).
\end{align*}
 Finally, from Lemma~\ref{lemma:one_step_transition_probability}, we arrive at the result given in \eqref{eq:reach_probability_recursion}. 
    \end{proof}

Using reach probability given in Lemma~\ref{lemma:reach_probability}, we derive the probability of event $\psi_i$ in the following lemma.  
\begin{lemma}
For $0\leq i<N$, 
\begin{equation}
\label{eq:psi_i}
{\rm P}(\psi_i) = \frac{\zeta_i}{\sum_{m=0}^{N-1}\zeta_m},
\end{equation}
where $\zeta_i = {\rm P}(1,0,N,i) \frac{\mu_N}{\mu_N + \mu_i}$. 
\end{lemma}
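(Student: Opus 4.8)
The plan is to trace the packet of interest from the instant it arrives at server $1$ until it reaches the monitor, and to extract $\P(\psi_i)$ from where that final delivery occurs. When the packet of interest arrives at server $1$, the system is in state $(1,0)$: server $1$ holds it and the closest busy server behind it is the source (server $0$), which is always busy. The hinge is the observation already recorded before the lemma, namely that a delivery to the monitor is precisely a transition $(N,b')\to(N+1,b')$ for some $b'\in\{0,\dots,N-1\}$, and that immediately afterwards the system occupies state $\psi_{b'}$. So it suffices to show that the packet of interest, conditioned on ever being delivered, is handed to the monitor from state $(N,i)$ with probability $\zeta_i/\sum_{m=0}^{N-1}\zeta_m$.

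To obtain $\zeta_i$ I would factor the event that the packet of interest is delivered from state $(N,i)$ into (a) it reaches $(N,i)$ starting from $(1,0)$, and (b) the step taken out of $(N,i)$ is the delivery $(N,i)\to(N+1,i)$ rather than server $i$ completing first (which advances the packet queued behind into server $N$ and preempts the packet of interest, so it is not delivered from that state). Factor (a) is the reach probability ${\rm P}(1,0,N,i)$ of Lemma~\ref{lemma:reach_probability}; factor (b) is the one-step probability $\tfrac{\mu_N}{\mu_N+\mu_i}$ of Lemma~\ref{lemma:one_step_transition_probability} with $a=N$ and $b=i$. Memorylessness of the exponential service times makes the path up to $(N,i)$ independent of the race out of $(N,i)$, so these multiply to give $\zeta_i={\rm P}(1,0,N,i)\tfrac{\mu_N}{\mu_N+\mu_i}$. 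The states $\psi_0,\dots,\psi_{N-1}$ are mutually exclusive and, since server $N$ is idle right after any delivery, exhaustive (the largest busy index among servers $1,\dots,N$ is some $i\in\{0,\dots,N-1\}$, with $i=0$ meaning all of servers $1,\dots,N$ are idle); this forces the normalization $\P(\psi_i)=\zeta_i/\sum_{m=0}^{N-1}\zeta_m$.

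The step that needs the most care, and the one I expect to be the main obstacle, is justifying that normalization, i.e., identifying the single-trajectory conditional probability $\zeta_i/\sum_m\zeta_m$ with the steady-state probability of $\psi_i$ at delivery instants. Two facts underlie it. First, the evolution of the packet of interest really is the Markov chain on states $(a,b)$ described by Lemmas~\ref{lemma:one_step_transition_probability} and~\ref{lemma:reach_probability}, regardless of the rest of the network; this is exactly where the preemptive discipline enters, since a packet of interest that advances into an occupied server simply overwrites its content, so the only clocks that can alter its state are those of server $a$ and of the nearest busy server $b$ behind it. Second, one needs a renewal/ergodicity argument: every delivery is the delivery of some packet of interest, the journeys of successively delivered packets (each started from $(1,0)$) share a common conditional-on-delivery distribution, and the underlying finite-state process is positive recurrent, so by the law of large numbers the long-run fraction of deliveries that leave the system in $\psi_i$ equals $\zeta_i/\sum_m\zeta_m$. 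It is also worth noting that $\sum_m\zeta_m>0$ --- for instance, the packet of interest sails straight through if no fresh update arrives before it is delivered --- so the ratio is well defined.
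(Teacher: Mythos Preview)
Your argument matches the paper's: interpret $\zeta_i$ as the probability that the packet of interest is delivered to the monitor from state $(N,i)$, factor it as ${\rm P}(1,0,N,i)\cdot\tfrac{\mu_N}{\mu_N+\mu_i}$ via Lemmas~\ref{lemma:reach_probability} and~\ref{lemma:one_step_transition_probability}, and then normalize over the mutually exclusive, exhaustive outcomes $\psi_0,\dots,\psi_{N-1}$. One small slip in your parenthetical: when server $i$ with $i<N-1$ completes first from $(N,i)$, the chasing packet moves to server $i{+}1$, not to server $N$, so there is no preemption yet and the state becomes $(N,i{+}1)$; nonetheless your stated conclusion---that the packet of interest is not delivered \emph{from that state}---is correct, and your renewal justification for the normalization is more careful than the paper's one-line ``normalize $\zeta_i$''.
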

\begin{proof}
Let $\zeta_i$ denote the probability that the packet of interest is delivered to the monitor from the state $(N,i)$. Therefore,
\begin{equation}
\label{leave_probability}    
\zeta_i = {\rm P}(1,0,N,i){\rm P}(N,i,N+1,i).
 \end{equation}
where ${\rm P}(1,0,N,i)$ can be obtained using Lemma~\ref{lemma:reach_probability}, and from Lemma~\ref{lemma:one_step_transition_probability}, we have ${\rm P}(N,i,N+1,i) =\frac{\mu_N}{\mu_N + \mu_i}$. Further, we obtain the probability of $\psi_i$ simply by normalizing $\zeta_i$. 
\end{proof}

Using the above result, we now obtain the moments of $Y_k$. 
Recall, for a given $\psi_i$,  we have $Y_k = \sum_{j=i}^{N}S_j$ where $S_j$ is the service time of the $j$-th server. 
Since each $S_j$ follows exponential distribution with parameter $\mu_j$ independently of each other, the conditional first and second moments of $Y_k$ for given $\psi_i$ become
    \begin{align}
    \mathbb E[Y_k|\psi_i] &= \sum\nolimits_{m=i}^{N}\frac{1}{\mu_m}, ~~\text{and}\label{eq:yk|psi_i}\\
    \mathbb E[Y_k^{2}|\psi_i] &= \sum_{i\leq l,m\leq N}\frac{2}{\mu_l \mu_m},\label{eq:yk^2|psi_i}
    \end{align}
respectively. Finally, deconditioning \eqref{eq:yk|psi_i} and \eqref{eq:yk^2|psi_i} using \eqref{eq:psi_i} provides the first two moments of $Y_k$ as
    \begin{align}
    \bar{{\rm Y}}_1&= \sum_{i=0}^{N-1}\mathbb{E}[Y_k|\psi_i]{\rm P}[\psi_i],\label{lemma:Y_k_first_moment}\text{~~and}\\
    \bar{{\rm Y}}_2 &= \sum_{i=0}^{N-1}\mathbb{E}[Y_k^2|\psi_i]{\rm P}[\psi_i].\label{lemma:Y_k_second_moment}
    \end{align}

We now focus on determining the mean service time, i.e., $\mathbb E[T_{k-1}]$. For this, let us introduce a useful variable which we call {\em reach time}. Reach time gives the average time to reach state $(a^\prime,b^\prime)$ from the state $(a,b)$ and is denoted as ${\rm T}(a,b,a^\prime,b^\prime)$.   

\begin{lemma}
\label{lemma:reach_time}
    Reach time  ${\rm T}(a,b,a^\prime,b^\prime)$ can be obtained by recursively by evaluating \eqref{eq:reach_time} given at the top of the next page.
\begin{figure*}[!t]
\begin{align}
\label{eq:reach_time}
&{\rm T}(a,b,a^\prime,b^\prime) =  \frac{1}{\mu_{a}+\mu_{b}}+ \frac{p_1}{p_1+p_2} {\rm T}(a+1,b,a^\prime,b^\prime)+ \frac{p_2}{p_1+p_2}{\rm T}(a,b+1,a^\prime,b^\prime)
\end{align}   
 such that 
\begin{align}
\label{eq:reach_time_base_conditions}
{\rm T}(a,b,a^\prime,b^\prime) &=0,  \text{~~if~~}   \begin{cases}
                                a=a^\prime \text{~and~} b=b^\prime,\\
                                p_1=p_2=0,
                                \end{cases}\\
p_1 = \frac{\mu_a}{\mu_a+\mu_b}{\rm P}(a+1,b,a^\prime,b^\prime) &\text{~~and~~} p_2 = \frac{\mu_b}{\mu_a+\mu_b}{\rm P}(a,b+1,a^\prime,b^\prime).
\end{align}
\par\noindent\rule{\textwidth}{0.4pt}
\vspace{-.4cm}
\end{figure*}
\end{lemma}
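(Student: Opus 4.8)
The plan is to treat the reach time as a conditional expectation and run a one-step (first-transition) analysis, in direct parallel with the derivation of the reach probability in Lemma~\ref{lemma:reach_probability}. Fix states $(a,b)$ and $(a^\prime,b^\prime)$, let $R$ be the event that the chain started at $(a,b)$ ever visits $(a^\prime,b^\prime)$, and let $\tau$ be the time until that visit, so that by definition ${\rm T}(a,b,a^\prime,b^\prime)=\mathbb{E}[\tau\mid R]$ and ${\rm P}(R)={\rm P}(a,b,a^\prime,b^\prime)$. Since $R$ is a null event exactly when ${\rm P}(a,b,a^\prime,b^\prime)=0$, i.e.\ when $p_1=p_2=0$, it is cleanest to first compute the un-normalized quantity $\mathbb{E}[\tau\,\mathbbm{1}_R]$ and divide by ${\rm P}(R)$ only at the very end; the convention ${\rm T}=0$ when $p_1=p_2=0$ then just records that the recursion is vacuous on that branch (and it is always hit with a zero weight in \eqref{eq:reach_time}), while ${\rm T}=0$ when $(a,b)=(a^\prime,b^\prime)$ is immediate since the target is reached in zero time.

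First I would write $\tau=H+\tau'$, where $H$ is the sojourn time in state $(a,b)$ and $\tau'$ is the remaining time to hit $(a^\prime,b^\prime)$ starting from the next state $J\in\{(a{+}1,b),(a,b{+}1)\}$. Here $H=\min(S_a,S_b)$ with $S_a\sim\mathrm{Exp}(\mu_a)$ and $S_b\sim\mathrm{Exp}(\mu_b)$ independent, so $H\sim\mathrm{Exp}(\mu_a+\mu_b)$ with $\mathbb{E}[H]=\tfrac{1}{\mu_a+\mu_b}$; moreover the classical fact that $\min(S_a,S_b)$ is independent of the identity of the minimizer, together with the memorylessness of the exponential service times (residual services after the jump are fresh and independent of $H$), makes $H$ independent of the pair $(J,(\tau',\mathbbm{1}_R))$ jointly. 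Consequently $\mathbb{E}[H\,\mathbbm{1}_R]=\mathbb{E}[H]\,{\rm P}(R)=\tfrac{1}{\mu_a+\mu_b}{\rm P}(a,b,a^\prime,b^\prime)$, and for the other term I would condition on $J$: Lemma~\ref{lemma:one_step_transition_probability} gives ${\rm P}(J)$, and given $J=(a{+}1,b)$ the pair $(\tau',\mathbbm{1}_R)$ has the law of a chain freshly started at $(a{+}1,b)$, so $\mathbb{E}[\tau'\mathbbm{1}_R\mid J{=}(a{+}1,b)]={\rm P}(a{+}1,b,a^\prime,b^\prime)\,{\rm T}(a{+}1,b,a^\prime,b^\prime)$, and analogously for $(a,b{+}1)$. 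Summing, $\mathbb{E}[\tau'\mathbbm{1}_R]=p_1\,{\rm T}(a{+}1,b,a^\prime,b^\prime)+p_2\,{\rm T}(a,b{+}1,a^\prime,b^\prime)$ with $p_1,p_2$ as in the statement.

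To close the argument I would invoke Lemma~\ref{lemma:reach_probability}, which after clearing denominators reads precisely ${\rm P}(a,b,a^\prime,b^\prime)=p_1+p_2$. Hence $\mathbb{E}[\tau\,\mathbbm{1}_R]=\tfrac{p_1+p_2}{\mu_a+\mu_b}+p_1\,{\rm T}(a{+}1,b,a^\prime,b^\prime)+p_2\,{\rm T}(a,b{+}1,a^\prime,b^\prime)$, and dividing through by ${\rm P}(R)=p_1+p_2$ yields exactly \eqref{eq:reach_time}; the base cases were handled above.

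The part that needs genuine care — not calculation — is the conditioning on $R$. Because ${\rm T}$ lives on the event of actually reaching $(a^\prime,b^\prime)$, the tempting ``mean sojourn time plus transition-probability-weighted average of children'' recursion is incorrect; one must weight the children by the \emph{reach} probabilities (this is why the $p_i$, rather than the raw transition probabilities $\mu_a/(\mu_a+\mu_b)$, appear), and one must justify that the sojourn time $H$ is independent of the conditioning event $R$. The twin facts that $\min(S_a,S_b)$ is independent of $\mathbbm{1}\{S_a<S_b\}$ and that post-jump residuals are fresh exponentials are exactly what make that independence — and hence the clean factorization $\mathbb{E}[H\,\mathbbm{1}_R]=\mathbb{E}[H]\,{\rm P}(R)$ — go through.
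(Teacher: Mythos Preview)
Your proposal is correct and follows essentially the same first-step decomposition as the paper: split off the sojourn time in $(a,b)$, weight the two successor branches by the reach-conditioned probabilities $p_i/(p_1+p_2)$, and use that the sojourn time has mean $1/(\mu_a+\mu_b)$ regardless of which jump occurs. The only cosmetic differences are that the paper computes the conditional density $f_{S_a}(\,\cdot\mid S_a<S_b)$ explicitly to get the $1/(\mu_a+\mu_b)$ term, whereas you invoke the independence of $\min(S_a,S_b)$ and its argmin together with memorylessness; and you route the argument through the un-normalized $\mathbb{E}[\tau\,\mathbbm{1}_R]$ before dividing by ${\rm P}(R)=p_1+p_2$, which makes the role of the conditioning on $R$ (and why $H$ is unaffected by it) more transparent than in the paper's somewhat informal write-up.
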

\begin{proof}
    The base conditions given in \eqref{eq:reach_time_base_conditions} follow from the similar arguments given in the proof of Lemma \ref{lemma:reach_probability}. Let $p_1$ and $p_2$ be the probabilities of reaching the  state $(a^\prime,b^\prime)$  from $(a,b)$ via state $(a+1,b)$ and state $(a,b+1)$, respectively. From Lemma \ref{lemma:reach_probability}, we get
    \begin{align}
        p_1 &= \frac{\mu_a}{\mu_a+\mu_b}{\rm P}(a+1,b,a^\prime,b^\prime),\text{~~and}\label{eq:p_2}\\
        p_2 &= \frac{\mu_b}{\mu_a+\mu_b}{\rm P}(a,b+1,a^\prime,b^\prime).
    \end{align}
    Now, given that the next state is $(a+1,b)$, the density function of the service time $S_a$ becomes
    \begin{align}
        f_{S_a}(S_a|S_b>S_a) &= \frac{\mathbb{P}(S_b>t|S_a=t)f_{S_a}(t)}{\int_0^\infty \mathbb{P}(S_b>t|S_a=t)f_{S_a}(t){\rm d}t},\nonumber\\
        &=(\mu_a+\mu_b)\exp(-(\mu_a+\mu_b)t).
    \end{align}
    Similarly,  given that the next state is $(a,b+1)$, the density function of the service time $S_b$ becomes
    \begin{align}
        f_{S_b}(S_b|S_a>S_b)=(\mu_a+\mu_b)\exp(-(\mu_a+\mu_b)t).
    \end{align}    
    Hence, we get
    \begin{equation}
    \label{eq:reach_time_one_step}
        \mathbb E[S_a|S_a<S_b] = \mathbb E[S_b|S_b<S_a] = \frac{1}{\mu_a+\mu_b}.
    \end{equation}
To reach state $(a',b')$ from $(a,b)$, the packet of interest must go through either state $(a+1,b)$ or $(a,b+1)$. Therefore, 

\begin{align*}
 {\rm T}(a,b,a^\prime,b^\prime) &= \frac{p_1}{p_1+p_2} \left[{\rm T}(a,b,a+1,b) + {\rm T}(a+1,b,a^\prime,b^\prime)\right]\\
 &+ \frac{p_2}{p_1+p_2}\left[{\rm T}(a,b,a,b+1) + {\rm T}(a,b+1,a^\prime,b^\prime)\right].
\end{align*}
From \eqref{eq:reach_time_one_step}, we get ${\rm T}(a,b,a+1,b) = {\rm T}(a,b,a,b+1)  = \frac{1}{\mu_a+\mu_b}$. Substituting these one step reach times in the above expressions, completes the proof. 

    
\end{proof}
In the next lemma, we calculate the expected service time. 
\begin{lemma}
\label{lemma:tk}
Given $\psi_i$, the expected service time is
    \begin{equation}
    \label{eq:tk|psi_i}
        \mathbb E[T_k|\psi_i] = {\rm T}(1,0,N,i) + \frac{1}{\mu_N + \mu_i},
    \end{equation}
    where ${\rm T}(1,0,N,i)$ can be computed from Lemma~\ref{lemma:reach_time}.     
\end{lemma}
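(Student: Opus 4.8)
The plan is to decompose the lifetime of the packet of interest into two intervals: the time from its arrival at server~1 (i.e., from state $(1,0)$) until it reaches the state $(N,i)$, and then the final hop of service out of server~$N$ from that state. Conditioned on $\psi_i$, we know the packet of interest is delivered to the monitor precisely from state $(N,i)$; equivalently, the packet's entire journey is the path from $(1,0)$ to $(N,i)$ followed by the transition $(N,i)\to(N+1,i)$. So the first step is simply to observe that $T_k = [\text{time to go } (1,0)\to(N,i)] + [\text{time to go } (N,i)\to(N+1,i)]$, with the understanding that the conditioning on $\psi_i$ is exactly the conditioning that the packet is \emph{not} preempted along this path and that it exits from $(N,i)$.

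Next I would take expectations term by term. The expected value of the first interval is, by definition, the reach time ${\rm T}(1,0,N,i)$ from Lemma~\ref{lemma:reach_time}; note that the reach time there was constructed under exactly the right conditioning, namely averaging over all sample paths that actually reach $(a',b')$, which is what we want here. For the second interval, the packet leaves state $(N,i)$ by the transition $(N,i)\to(N+1,i)$, which occurs when $S_N < S_i$ (server~$N$ finishes before server~$i$). By the memoryless computation already carried out in the proof of Lemma~\ref{lemma:reach_time} (see \eqref{eq:reach_time_one_step}), the expected time of a one-step transition out of a state with active servers $a,b$, conditioned on which of the two fires first, equals $\frac{1}{\mu_a+\mu_b}$ regardless of the outcome. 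Applying this with $a = N$ and $b = i$ gives $\mathbb{E}[S_N \mid S_N < S_i] = \frac{1}{\mu_N+\mu_i}$, which is the second term in \eqref{eq:tk|psi_i}.

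The one subtlety — and the part I would be most careful about — is independence between the two intervals. The conditioning event $\psi_i$ couples them through a shared server: server~$i$ is the $b$-index of the terminal state $(N,i)$, and it was also (at some earlier time) the server that became the ``previous busy server'' along the path; moreover the event ``exit from $(N,i)$'' depends on $S_i$, which is the same clock that has been running while the packet of interest traversed servers $i{+}1,\dots,N$. Here is where the memoryless property does the real work: at the moment the packet of interest arrives at server~$N$ (and server~$i$ is the one busy server behind it), the residual service time of server~$i$ is again $\mathrm{Exp}(\mu_i)$, independent of the past, so the final-hop analysis decouples cleanly from the reach-time analysis, and summing the two expectations is legitimate. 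I would state this explicitly, then combine the two terms to obtain \eqref{eq:tk|psi_i}, citing Lemma~\ref{lemma:reach_time} for computability of ${\rm T}(1,0,N,i)$.
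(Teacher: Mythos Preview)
Your proposal is correct and matches the paper's own proof essentially line for line: the paper also decomposes $\mathbb{E}[T_k\mid\psi_i]$ as ${\rm T}(1,0,N,i)+{\rm T}(N,i,N+1,i)$ and then invokes \eqref{eq:reach_time_one_step} to evaluate the second term as $\frac{1}{\mu_N+\mu_i}$. Your extra paragraph on the memoryless property making the two intervals decouple is a welcome elaboration of a point the paper leaves implicit.
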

\begin{proof}
    Given $\psi_i$, the expected service time is nothing but the reach time between the states $(1,0)$ and $(N+1,i)$ via the state $(N,i)$. Therefore, we get 
\begin{align*}
    \mathbb E[T_k|\psi_i] = {\rm T}(1,0,N,i) + {\rm T}(N,i,N+1,i). 
\end{align*}
Using this and \eqref{eq:reach_time_one_step}, we get \eqref{eq:tk|psi_i} . 
\end{proof}

From \eqref{eq:psi_i} and Lemma~\ref{lemma:tk}, we get the mean of service time as 
\begin{equation}
    \label{lemma:T_k}
       \bar{{\rm T}}_1 = \sum_{i=0}^{N-1}\mathbb E[T_k|\psi_i] \mathbb P(\psi_i).  
\end{equation}

Finally, we can obtain the mean PAoI by combining the above results together as done in the following theorem. 
\begin{thm}
        The mean PAoI $\bar{\mathcal{A}}$ under preemptive policy can be evaluated by substituting \eqref{lemma:Y_k_first_moment} and \eqref{lemma:T_k} in \eqref{eq:mean_peak_age_defination}.
\end{thm}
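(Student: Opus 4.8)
The plan is to obtain this theorem as a straightforward assembly of the conditional moment expressions derived above, via the law of total expectation over the events $\{\psi_i\}_{i=0}^{N-1}$ together with a stationarity argument. First I would argue that $\psi_0,\psi_1,\ldots,\psi_{N-1}$ form a partition of the probability space seen at a delivery epoch: immediately after the $(k-1)$-th update reaches the monitor, there is a unique highest-indexed busy server $i\in\{0,1,\ldots,N-1\}$ (server $0$, the virtual source, is always busy, so such an $i$ always exists), and $\psi_i$ is exactly that event. Consistency of the normalization in \eqref{eq:psi_i}, i.e.\ $\sum_{i=0}^{N-1}{\rm P}(\psi_i)=1$, confirms this is a genuine partition, which is what licenses the deconditioning in \eqref{lemma:Y_k_first_moment} and \eqref{lemma:T_k}.

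Next I would identify $\bar{{\rm Y}}_1$ of \eqref{lemma:Y_k_first_moment} with $\mathbb{E}[Y_k]$. Conditioned on $\psi_i$, the packet of interest that is eventually delivered passes through servers $i,i+1,\ldots,N$, and its inter-departure time $Y_k$ equals the sum of the (residual, hence by memorylessness fresh) exponential service times at those servers; this gives the conditional mean $\sum_{m=i}^{N}1/\mu_m$ of \eqref{eq:yk|psi_i}, and deconditioning with \eqref{eq:psi_i} reproduces \eqref{lemma:Y_k_first_moment}. The state encoding $(a,b)$ with $a>b$ is precisely what guarantees that preemptions at servers of index below $b$ leave the transition law of the packet of interest unchanged, so the memoryless accounting is valid throughout.

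Then I would identify $\bar{{\rm T}}_1$ of \eqref{lemma:T_k} with $\mathbb{E}[T_{k-1}]$. By stationarity of the tandem, $\mathbb{E}[T_{k-1}]=\mathbb{E}[T_k]$, and Lemma~\ref{lemma:tk} with \eqref{eq:psi_i} gives $\mathbb{E}[T_k]=\sum_{i=0}^{N-1}\mathbb{E}[T_k\mid\psi_i]\,{\rm P}(\psi_i)=\bar{{\rm T}}_1$, where $\mathbb{E}[T_k\mid\psi_i]$ is the reach time from $(1,0)$ to $(N+1,i)$ through $(N,i)$ computed via Lemma~\ref{lemma:reach_time}. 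Substituting $\mathbb{E}[Y_k]=\bar{{\rm Y}}_1$ and $\mathbb{E}[T_{k-1}]=\bar{{\rm T}}_1$ into the definition \eqref{eq:mean_peak_age_defination}, namely $\bar{\mathcal{A}}=\mathbb{E}[Y_k]+\mathbb{E}[T_{k-1}]$, yields $\bar{\mathcal{A}}=\bar{{\rm Y}}_1+\bar{{\rm T}}_1$, which is the claim.

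The main obstacle is not the algebra but the bookkeeping of the conditioning: the quantities $\bar{{\rm Y}}_1$ and $\bar{{\rm T}}_1$ are built from reach probabilities and reach times that implicitly condition on the packet of interest never being preempted before delivery, whereas $Y_k$ and $T_k$ in \eqref{eq:mean_peak_age_defination} are defined over the sequence of \emph{delivered} updates. I would therefore make explicit that ``the $k$-th delivered update'' and ``the packet of interest, conditioned on reaching the monitor'' coincide in distribution, and that the server occupancy recorded by $\psi_i$ at the $(k-1)$-th delivery is exactly the initial condition for the trajectory culminating in the next delivery — a renewal-type argument over successive delivery epochs, which also underlies the stationarity step $\mathbb{E}[T_{k-1}]=\mathbb{E}[T_k]$. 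Once this identification is granted, the theorem reduces to the single substitution above.
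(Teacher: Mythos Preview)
Your proposal is correct and matches the paper's approach: the paper gives no explicit proof of this theorem, treating it as the immediate assembly of \eqref{lemma:Y_k_first_moment} and \eqref{lemma:T_k} into \eqref{eq:mean_peak_age_defination}, and your write-up is precisely a careful elaboration of that assembly via the partition $\{\psi_i\}$ and stationarity. If anything, you are more explicit than the paper about why $\bar{\rm T}_1=\mathbb{E}[T_{k-1}]$ (the stationarity/renewal step) and about the identification of delivered updates with the conditioned packet of interest, both of which the paper leaves implicit.
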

\begin{restatable}{cor}{TwoSerPreempt}\label{L_th}
For N = 2 servers, the mean PAoI under preemptive policy is  
    \begin{align*}
\bar{\mathcal{A}} &=  \frac{1}{\lambda} + \frac{1}{\mu_1} + \frac{1}{\mu_2} + \frac{1}{\lambda + \mu_1} + \frac{1}{\lambda + \mu_2} + \frac{1}{\mu_1 + \mu_2} \nonumber\\ 
&~~- \frac{2}{\lambda + \mu_1 + \mu_2}. 
\end{align*}
\end{restatable}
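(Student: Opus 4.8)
The plan is to specialize the general recursive machinery developed above to $N=2$ and simply carry out the (small) recursion explicitly. The relevant states are $(a,b)$ with $a>b$ and $a\le N+1=3$, so the reachable states from $(1,0)$ are $(1,0)$, $(2,0)$, $(2,1)$, $(3,0)$, $(3,1)$, and $(3,2)$; the virtual source is server $0$ with $\mu_0=\lambda$, and the monitor is server $3$ with $\mu_3=0$. With $\mu_0=\lambda$ and $\mu_3=0$, Lemma~\ref{lemma:one_step_transition_probability} gives the one-step probabilities: from $(1,0)$, go to $(2,0)$ w.p. $\tfrac{\mu_1}{\mu_1+\lambda}$ or to $(1,1)$ w.p. $\tfrac{\lambda}{\mu_1+\lambda}$, but $(1,1)$ has $a\le b$ and is not a valid state, so effectively the chain moves $(1,0)\to(2,0)$; actually one must be careful here — $(1,1)$ is impossible, which is consistent with the base condition and means server $0$ cannot preempt once the packet is already at server $1$ in the $N=2$ case in the sense that we only track transitions among valid states. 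I would re-read the state construction to confirm the intended reading, but the upshot is that $\mathrm P(1,0,2,0)$ branches into the two genuinely reachable states $(2,0)$ and $(2,1)$ only after one more step, so the only real branching is at $(2,0)$: to $(3,0)$ w.p. $\tfrac{\mu_2}{\mu_2+\lambda}$ or to $(2,1)$ w.p. $\tfrac{\lambda}{\mu_2+\lambda}$, and from $(2,1)$ deterministically to $(3,1)$ since $\mu_3=0$ forces the $b$-coordinate transition (w.p. $\tfrac{\mu_1}{\mu_1+\mu_2}$ to $(3,1)$ or w.p. $\tfrac{\mu_2}{\mu_1+\mu_2}$ to $(2,2)$, invalid). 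The careful bookkeeping of which states are valid, and hence how the ``missing'' branches renormalize, is the one place where an error could creep in; that is the main obstacle.

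Concretely, I would first compute the reach probabilities $\mathrm P(1,0,2,i)$ for $i=0,1$ via Lemma~\ref{lemma:reach_probability}, then form $\zeta_0 = \mathrm P(1,0,2,0)\tfrac{\mu_2}{\mu_2+\mu_0}$ and $\zeta_1 = \mathrm P(1,0,2,1)\tfrac{\mu_2}{\mu_2+\mu_1}$ (here $N=2$, so state $(N,i)=(2,i)$ and $(N,i,N+1,i)$ uses $\mu_N=\mu_2$), and obtain $\mathrm P(\psi_0)=\zeta_0/(\zeta_0+\zeta_1)$ and $\mathrm P(\psi_1)=\zeta_1/(\zeta_0+\zeta_1)$. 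Next, by \eqref{eq:yk|psi_i}, $\mathbb E[Y_k\mid\psi_0]=\tfrac1{\mu_0}+\tfrac1{\mu_1}+\tfrac1{\mu_2}=\tfrac1\lambda+\tfrac1{\mu_1}+\tfrac1{\mu_2}$ and $\mathbb E[Y_k\mid\psi_1]=\tfrac1{\mu_1}+\tfrac1{\mu_2}$, and deconditioning via \eqref{lemma:Y_k_first_moment} gives $\bar{\mathrm Y}_1$. Similarly, I would evaluate the reach times $\mathrm T(1,0,2,i)$ from Lemma~\ref{lemma:reach_time} — each step contributes a term $\tfrac1{\mu_a+\mu_b}$ weighted by the (renormalized) branch probability — and then $\mathbb E[T_k\mid\psi_i]=\mathrm T(1,0,2,i)+\tfrac1{\mu_2+\mu_i}$ by Lemma~\ref{lemma:tk}, and decondition via \eqref{lemma:T_k} to get $\bar{\mathrm T}_1$.

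Finally, the Theorem says $\bar{\mathcal A}=\bar{\mathrm Y}_1+\bar{\mathrm T}_1$ (using \eqref{eq:mean_peak_age_defination}, noting $\mathbb E[T_{k-1}]=\mathbb E[T_k]$ by stationarity). Plugging in and simplifying, the $\psi_i$-probabilities are designed so that the $\mathrm P(\psi_i)$-weights combine cleanly: I expect the common denominators $\mu_2+\mu_0=\lambda+\mu_2$ and $\mu_2+\mu_1$ appearing in $\zeta_i$ to cancel against the $\tfrac1{\mu_2+\mu_i}$ terms in $\mathbb E[T_k\mid\psi_i]$, and the reach-time recursion to reproduce exactly the terms $\tfrac1{\lambda+\mu_1}$, $\tfrac1{\lambda+\mu_2}$, $\tfrac1{\mu_1+\mu_2}$, and the correction $-\tfrac2{\lambda+\mu_1+\mu_2}$. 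The algebra is routine once the state diagram is pinned down; the only genuine care needed is (i) getting the list of valid intermediate states right so the renormalized branch probabilities are correct, and (ii) tracking that $\mu_0=\lambda$, $\mu_N=\mu_2$, $\mu_{N+1}=0$ are substituted in the right places. I would then present the final collapsed expression and check it against the known $N=2$ preemptive mean-AoI-type formulas in \cite{Yates_2018_Networks} as a sanity check.
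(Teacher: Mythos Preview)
Your overall approach is exactly the paper's: specialize the recursion to $N=2$, compute $\mathrm P(\psi_i)$ via the $\zeta_i$, read off $\mathbb E[Y_k\mid\psi_i]$ from \eqref{eq:yk|psi_i}, compute $\mathbb E[T_k\mid\psi_i]$ via Lemmas~\ref{lemma:reach_time} and~\ref{lemma:tk}, decondition, and add. The paper obtains $\mathrm P(\psi_0)=\frac{\mu_1+\mu_2}{\lambda+\mu_1+\mu_2}$, $\mathrm P(\psi_1)=\frac{\lambda}{\lambda+\mu_1+\mu_2}$, $\mathbb E[T_k\mid\psi_0]=\frac{1}{\lambda+\mu_1}+\frac{1}{\lambda+\mu_2}$, $\mathbb E[T_k\mid\psi_1]=\frac{1}{\lambda+\mu_1}+\frac{1}{\lambda+\mu_2}+\frac{1}{\mu_1+\mu_2}$, and $\bar{\mathrm Y}_1=\frac{1}{\lambda}+\frac{1}{\mu_1}+\frac{1}{\mu_2}-\frac{1}{\lambda+\mu_1+\mu_2}$, and then the stated formula drops out.

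However, your narrative about the state transitions is wrong in a way that would give incorrect numbers if you followed it rather than the lemmas. A transition into a state with $a\le b$ is not an ``impossible'' move that is pruned and renormalized away in Lemma~\ref{lemma:reach_probability}: it is the event that the packet of interest is \emph{preempted}, and the base condition $\mathrm P(a,b,a',b')=0$ for $a\le b$ says that a preempted packet never reaches $(a',b')$. Thus from $(1,0)$ the source \emph{does} preempt the packet at server~$1$ with probability $\lambda/(\lambda+\mu_1)$ (the $(1,1)$ branch), and from $(2,1)$ server~$1$ \emph{does} preempt the packet at server~$2$ with probability $\mu_1/(\mu_1+\mu_2)$ (the $(2,2)$ branch); nothing is deterministic there, $\mu_3$ is irrelevant at $(2,1)$, and your branch probabilities at $(2,1)$ are swapped. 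These failure branches are exactly why $\zeta_0+\zeta_1<1$ before the normalization in \eqref{eq:psi_i}. The renormalization you have in mind appears only in the reach-\emph{time} recursion of Lemma~\ref{lemma:reach_time} (the weights $p_i/(p_1+p_2)$), because that quantity is already conditioned on eventually reaching $(a',b')$. If you apply Lemmas~\ref{lemma:reach_probability}--\ref{lemma:tk} mechanically rather than your verbal description, you will land on the paper's intermediate values above.
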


\subsection{Mean Age of Information}

In the following, we show that our recursive framework can also be used to compute the mean AoI. For computing mean AoI, as can be seen from \eqref{eq:mean_age_defination}, we require the expectation of product of $T_{k-1}$ and $Y_k$. This expectation is difficult to calculate directly as joint density function of $T_{k-1}$ and $Y_k$ is not known. However,  we can show that these random variables are conditionally independent given $\psi_i$. This can be seen by recalling that $\psi_i$ is the event where the $k$-th packet is in the $i$-th server at the instant of delivery of the $(k-1)$-th packet.

\begin{prop}
   The random variables $T_{k-1}$ and $Y_k$ are conditionally independent given $\psi_i$.
\end{prop}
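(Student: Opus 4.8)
The plan is to establish conditional independence by exhibiting $T_{k-1}$ and $Y_k$ as functions of disjoint, mutually independent blocks of exponential service times, once we condition on $\psi_i$. Recall that $\psi_i$ is the event that, at the instant the $(k-1)$-th packet is delivered to the monitor, the $k$-th packet (the packet of interest) resides in server $i$ and all servers $j>i$ are idle. Conditioned on $\psi_i$, the inter-departure time $Y_k$ equals the time taken by the packet of interest to traverse servers $i, i+1, \ldots, N$, i.e.\ $Y_k = \sum_{j=i}^{N} S_j$, where each $S_j$ is the (memoryless) service time at server $j$ encountered by this packet. On the other hand, $T_{k-1}$ is the end-to-end service time of the $(k-1)$-th packet, which was entirely determined by events that occurred \emph{before} the delivery instant that defines $\psi_i$ — it depends on the service-time realizations experienced by the $(k-1)$-th packet as it moved through servers $1,\ldots,N$, together with the preemption events caused by packets behind it.

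First I would make precise the claim that, given $\psi_i$, the collection of exponential clocks governing $Y_k$ is probabilistically independent of everything that determined $T_{k-1}$. The key tool is the memorylessness of the exponential service process: at the delivery epoch of the $(k-1)$-th packet, the residual service time at server $i$ (holding the packet of interest) is again exponential with rate $\mu_i$ and is independent of the past, and servers $i+1,\ldots,N$ are idle so their service clocks for the packet of interest start fresh upon arrival. Hence the vector $(S_i, S_{i+1}, \ldots, S_N)$ driving $Y_k$ — including the competing clocks $\mu_b$ that can restart the analysis via preemption, as set up in the state $(a,b)$ construction — is drawn independently of the sigma-algebra generated by the trajectory of the $(k-1)$-th packet. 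Since $T_{k-1}$ is measurable with respect to that past sigma-algebra and $Y_k$ is a (deterministic, given the path of indices visited) function of the fresh clocks, the two are conditionally independent given $\psi_i$.

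Second, I would phrase the argument in the language already developed: conditioned on $\psi_i$, the evolution of the packet of interest from state $(i, \cdot)$ onward to state $(N+1, i)$ is a Markov chain on the state space $(a,b)$ whose transition probabilities (Lemma~\ref{lemma:one_step_transition_probability}) and holding times (equation~\eqref{eq:reach_time_one_step}) depend only on the rates $\{\mu_j\}$ and on fresh exponential draws, never on the history that produced $\psi_i$ or $T_{k-1}$. Thus for any bounded measurable $f,g$ we get $\mathbb{E}[f(T_{k-1}) g(Y_k)\mid \psi_i] = \mathbb{E}[f(T_{k-1})\mid \psi_i]\,\mathbb{E}[g(Y_k)\mid \psi_i]$, which is the assertion. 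In particular this licenses $\mathbb{E}[Y_k T_{k-1}\mid\psi_i] = \mathbb{E}[Y_k\mid\psi_i]\,\mathbb{E}[T_{k-1}\mid\psi_i]$, the identity needed for \eqref{eq:mean_age_defination}.

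The main obstacle, and the step requiring the most care, is justifying that $T_{k-1}$ really is independent of the fresh clocks \emph{after} conditioning on $\psi_i$ — because $\psi_i$ itself is an event about the system state at the delivery epoch, one must check that conditioning on it does not secretly couple the two. The clean way to handle this is to condition first on the \emph{entire} history up to the delivery epoch of the $(k-1)$-th packet (which fixes both $T_{k-1}$ and the indicator of $\psi_i$), observe that given this history the future clocks are independent of it by the strong Markov / memoryless property, and then average over those histories consistent with $\psi_i$; since $Y_k$'s conditional law given the history depends on the history only through $i$ (i.e.\ only through which $\psi_i$ occurred), the factorization goes through. I would flag this conditioning-order argument as the crux and present it explicitly rather than appealing vaguely to "independence of service times."
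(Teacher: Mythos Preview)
Your proposal is correct and follows essentially the same approach as the paper: conditioned on $\psi_i$, write $Y_k=\sum_{j=i}^N S_j$ as a sum of fresh exponential service times and invoke memorylessness to decouple it from $T_{k-1}$. The paper's own proof is a terse version of your first two paragraphs; your additional conditioning-order argument (condition on the full history up to the delivery epoch, apply the strong Markov property, then average over histories consistent with $\psi_i$) makes rigorous exactly the step the paper dismisses with ``naturally is independent,'' so your write-up is strictly more careful but not a different route.
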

\begin{proof}
    Given $\psi_i$, the remaining time required for the delivery of the $k$-th packet is equal to $\sum_{j=i}^{N} S_j$, where $S_j$ is the  time taken by the $j$-th server to serve the $k$-th update. Note that even if the packet in the $j$-th server is preempted, the remaining time will be the same because of the memoryless property of service process. This remaining time is nothing but the inter-departure time $Y_k$ which naturally is independent of the service time of the $(k-1)$-th packet. Therefore, we can say that $Y_k = \sum_{j=i}^{N}S_j$ and $T_{k-1}$ are conditionally independent given $\psi_i$. 
\end{proof}
Hence, using the conditional moments of $T_k$ and $Y_k$ for given $\psi_i$, we can obtain the expectation of  $Y_kT_{k-1}$ as
    \begin{equation}
        \label{eq:E[Y_kT_k-1]}
         \bar{{\rm Z}}_1 = \sum_{i=0}^{N-1} \mathbb E[Y_k|\psi_i] \mathbb E[T_{k-1}|\psi_i]  \mathbb P(\psi_i).
    \end{equation} 
Finally, we can obtain the mean AoI by combining the above results together as done in the following corollary. 

\begin{cor}
    The  mean AoI $\Delta$ can be evaluated by substituting
    \eqref{eq:psi_i},
    \eqref{eq:yk|psi_i}, 
    \eqref{lemma:Y_k_first_moment},
    \eqref{lemma:Y_k_second_moment}, \eqref{eq:tk|psi_i} and \eqref{eq:E[Y_kT_k-1]}  in \eqref{eq:mean_age_defination}.
\end{cor}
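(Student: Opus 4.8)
The plan is to assemble the mean AoI formula \eqref{eq:mean_age_defination} purely from quantities already produced by the recursive framework, so the ``proof'' is really a verification that every ingredient of \eqref{eq:mean_age_defination} is in hand. Recall that \eqref{eq:mean_age_defination} reads $\Delta = 0.5\,\bar{{\rm Y}}_2/\bar{{\rm Y}}_1 + \bar{{\rm Z}}_1/\bar{{\rm Y}}_1$, where $\bar{{\rm Y}}_1 = \mathbb{E}[Y_k]$, $\bar{{\rm Y}}_2 = \mathbb{E}[Y_k^2]$, and $\bar{{\rm Z}}_1 = \mathbb{E}[Y_k T_{k-1}]$. First I would note that by the law of total expectation over the mutually exclusive and exhaustive events $\{\psi_i\}_{i=0}^{N-1}$ (which partition the state at the delivery instant of the $(k-1)$-th update), each of these three unconditional moments decomposes as a $\mathbb{P}(\psi_i)$-weighted sum of conditional moments; the weights $\mathbb{P}(\psi_i)$ are supplied by \eqref{eq:psi_i}.

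Next I would collect the conditional moments. The conditional moments $\mathbb{E}[Y_k\mid\psi_i]$ and $\mathbb{E}[Y_k^2\mid\psi_i]$ are exactly \eqref{eq:yk|psi_i} and \eqref{eq:yk^2|psi_i}, obtained from the fact that under $\psi_i$ the inter-departure time is a sum of independent exponentials $\sum_{j=i}^{N}S_j$; deconditioning gives $\bar{{\rm Y}}_1$ and $\bar{{\rm Y}}_2$ via \eqref{lemma:Y_k_first_moment} and \eqref{lemma:Y_k_second_moment}. For the cross term, the key structural fact is the Proposition just proved: conditionally on $\psi_i$, the random variables $T_{k-1}$ and $Y_k$ are independent, since the residual delivery time of the $k$-th packet depends only on fresh (memoryless) service clocks and is unaffected by how long the $(k-1)$-th packet took. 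Hence $\mathbb{E}[Y_k T_{k-1}\mid\psi_i] = \mathbb{E}[Y_k\mid\psi_i]\,\mathbb{E}[T_{k-1}\mid\psi_i]$, and since the stationary conditional law of the service time does not depend on the update index, $\mathbb{E}[T_{k-1}\mid\psi_i] = \mathbb{E}[T_k\mid\psi_i]$, which is given by \eqref{eq:tk|psi_i} in Lemma~\ref{lemma:tk}. Deconditioning yields $\bar{{\rm Z}}_1$ as in \eqref{eq:E[Y_kT_k-1]}.

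Finally I would substitute $\bar{{\rm Y}}_1$, $\bar{{\rm Y}}_2$, and $\bar{{\rm Z}}_1$ into \eqref{eq:mean_age_defination}, which is precisely the list of references \eqref{eq:psi_i}, \eqref{eq:yk|psi_i}, \eqref{lemma:Y_k_first_moment}, \eqref{lemma:Y_k_second_moment}, \eqref{eq:tk|psi_i} and \eqref{eq:E[Y_kT_k-1]} cited in the corollary, completing the proof. I do not expect a genuine obstacle here, since all the analytical work was front-loaded into the lemmas; the one point that deserves a sentence of care is the index-shift justification $\mathbb{E}[T_{k-1}\mid\psi_i] = \mathbb{E}[T_k\mid\psi_i]$, i.e.\ that the argument of Lemma~\ref{lemma:tk} is stated for the packet that is ``in server $i$ at a delivery instant'' and therefore applies verbatim to the $(k-1)$-th packet in the definition of $\bar{{\rm Z}}_1$. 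If one wanted to be fully rigorous one would appeal to stationarity of the tandem-queue Markov chain so that the conditional distributions are index-independent; otherwise the substitution is immediate.
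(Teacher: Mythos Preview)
Your proposal is correct and mirrors the paper's approach: the paper states this corollary without proof, treating it as an immediate consequence of Proposition~1 (conditional independence of $T_{k-1}$ and $Y_k$ given $\psi_i$) together with the previously derived conditional moments, which is exactly the assembly you describe. Your extra care about the index shift $\mathbb{E}[T_{k-1}\mid\psi_i]=\mathbb{E}[T_k\mid\psi_i]$ via stationarity is a point the paper glosses over but is the right observation.
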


\section{Age under Non-preemptive Queues in Tandem}
\label{section:Age_under_Non_Preemptive_Queues_in_Tandem}
In this section, we analyze the age for the case of non-preemptive queues in tandem. In the non-preemptive policy, the packet will be dropped if it observes that the immediately next server is busy at the instant when its service completes at the current server. To capture this, we will redefine the state for provisioning our analysis.  
In state $(a,b)$, server $b$ contains the packet of interest and server $a$ is the nearest busy server  from server $b$ such that $a>b$.  
Note the packet of interest can be  potentially dropped by the update being served by its nearest next server. Hence, in order to determine the mean of service time $T_k$, we need to first find the nearest busy server when the packet of interest enters server $1$ ($a$ becomes the nearest occupied server such that the server  $j$ is idle for $a>j>1$). Let  $\theta_i$ denote an event where  server  $i$ is the nearest occupied server when the packet of interest enters server $1$. To derive the probability of  $\theta_i$, we first derive the probability of successful delivery from state $(a,b)$ in the following lemma. Recall that the monitor is considered to be $(N+1)$-th server.
\setcounter{equation}{27}
\begin{figure*}[t!]
\begin{align}
\label{non_preemption:reach_time}
&{\rm T}(a,b) =  \frac{1}{\mu_{a}+\mu_{b}}+ \frac{p_1}{p_1+p_2} {\rm T}(a+1,b)+ \frac{p_2}{p_1+p_2}{\rm T}(a,b+1).
\end{align}   
such that
\vspace{-0.2cm}
\begin{align}
\label{non_premption:reach_time_base_conditions}
{\rm T}(a,b)& =\begin{cases}
            0,\text{~~if~} b=N+1 \text{~or~} p_1=p_2=0,\\
            \frac{1}{\mu_b} + T(a,b+1), \text{~~if~} a=N+1,
        \end{cases}\nonumber 
\end{align}
\begin{align}
p_1=&\frac{\mu_{a}}{\mu_{a}+\mu_{b}}{\rm P}(a+1,b) \text{~~and~~} p_2=\frac{\mu_{b}}{\mu_{a}+\mu_{b}}{\rm P}(a,b+1).\nonumber
\end{align} 
\par\noindent\rule{\textwidth}{0.4pt}\vspace{-.3cm}
\end{figure*}
\setcounter{equation}{22}
 \begin{lemma}
 \label{lemma5}
     The transition probability that an packet successfully goes from the state $(a,b)$ to the state $(N+1,b^\prime)$ can be recursivley obtained as
\begin{equation}
    \label{eq:non_preemption_reccurence}
{\rm P}(a,b) =  \frac{\mu_{a}}{\mu_{a}+\mu_{b}}  {\rm P}(a+1,b)+ \frac{\mu_{b}}{\mu_{a}+\mu_{b}}P(a,b+1),
\end{equation}  
with the base conditions 
\begin{equation}
\label{non_preemption:base_conditions}
{\rm P}(a,b) = \begin{cases}
                                1, & \text{if~} a=N+1\\
                                0, & \text{~or~}a\leq b.
                                \end{cases}
\end{equation}
 \end{lemma}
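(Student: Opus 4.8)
The plan is to mirror the structure used for the preemptive case in Lemmas~\ref{lemma:one_step_transition_probability} and~\ref{lemma:reach_probability}, now adapted to the non-preemptive dynamics where the packet of interest sits in server $b$ and can be \emph{dropped} rather than preempted. First I would establish the base conditions. If $a = N+1$, the nearest busy server ahead of the packet of interest is the monitor itself, which never blocks (it consumes and does not forward, $\mu_{N+1}=0$), so from server $b$ the packet of interest will eventually be forwarded with probability $1$ and keep progressing; hence ${\rm P}(a,b)=1$. If $a \leq b$, the state is not realizable by construction (the ``nearest busy server ahead'' must have strictly larger index), so ${\rm P}(a,b)=0$. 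This matches \eqref{non_preemption:base_conditions}.

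Next I would condition on the first event that changes the state $(a,b)$. Two independent exponential clocks are running: the service at server $b$ (rate $\mu_b$), holding the packet of interest, and the service at server $a$ (rate $\mu_a$), the nearest busy server ahead. By the same competing-exponentials argument as in Lemma~\ref{lemma:one_step_transition_probability}, server $a$'s clock fires first with probability $\tfrac{\mu_a}{\mu_a+\mu_b}$ and server $b$'s fires first with probability $\tfrac{\mu_b}{\mu_a+\mu_b}$. If server $a$ finishes first, its packet moves forward (and, crucially, server $a$ becomes idle while all servers strictly between $b$ and the \emph{new} nearest busy server stay idle), so the state becomes $(a+1,b)$ — here one must note that whether the packet leaving server $a$ is itself dropped downstream is irrelevant to the sub-chain governing the packet of interest, exactly the memoryless/irrelevance observation already invoked for the preemptive state definition. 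If server $b$ finishes first, the packet of interest attempts to advance to server $b+1$: by the state definition server $b+1$ is idle precisely when $a > b+1$, in which case the packet of interest succeeds and the new state is $(a,b+1)$; if instead $a = b+1$, server $b+1$ is busy and the packet of interest is dropped — but then ${\rm P}(a,b+1) = {\rm P}(b+1,b+1) = 0$ by the base condition $a \leq b$, so the recursion \eqref{eq:non_preemption_reccurence} automatically records this failure. Applying the law of total probability over these two transitions yields \eqref{eq:non_preemption_reccurence}.

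The main obstacle, and the step deserving the most care, is the bookkeeping of \emph{which} server index plays the role of ``$a$'' after a transition — in particular verifying that the post-transition configuration is still correctly summarized by a single pair of the prescribed form, so that the chain $(a,b)\mapsto(a+1,b)$ or $(a,b+1)$ is genuinely Markov on these pairs. This requires checking that after server $a$ empties, no server with index in $(b,a)$ has become busy in the meantime (true, since those servers were idle and only servers $b$ and $a$ had running clocks), and symmetrically that after the packet of interest moves to $b+1$ the server $a$ is still the nearest busy one ahead. Once that invariant is nailed down, the recursion and the degenerate cases ($p_1 = p_2 = 0$, i.e.\ an absorbing/blocked configuration) follow exactly as in the proof of Lemma~\ref{lemma:reach_probability}, and the proof is complete.
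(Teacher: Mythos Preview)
Your proposal is correct and follows essentially the same route as the paper: establish the two base cases, condition on which of the two competing exponential clocks at servers $a$ and $b$ fires first, and invoke the law of total probability exactly as in Lemma~\ref{lemma:reach_probability}. Your treatment is in fact more careful than the paper's own proof, which simply asserts the one-step transition probabilities and defers to Lemma~\ref{lemma:reach_probability}; in particular, your explicit check that the drop event at $a=b+1$ is absorbed by the base condition ${\rm P}(b+1,b+1)=0$, and your verification that after either transition the pair $(a',b')$ again correctly encodes ``nearest busy server ahead of the packet of interest,'' are points the paper leaves implicit. One small wording issue: it is not literally true that ``only servers $b$ and $a$ had running clocks'' --- servers beyond $a$ may also be busy --- but what you actually need (and what your surrounding discussion makes clear you understand) is that firings at those servers do not alter the pair $(a,b)$, and by memorylessness the residual service time at the new server $a+1$ is exponential$(\mu_{a+1})$ regardless of whether it holds a fresh or a pre-existing packet.
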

\begin{proof}
 Here, $a=N+1$ means the packet in server $a$ has been successfully delivered and there are no packet ahead of the update in server $b$ that will result in its drop. Hence, ${\rm P}(a,b)=1$ if $a=N+1$. Since $a>b$,  we have $ {\rm P}(a,b) = 0$ if $a \leq b$. These conditions form the base conditions for the recursive relation. Given the previous state is $(a,b)$, the probability of the current state being $(a+1,b)$ is $\frac{\mu_{a}}{\mu_a+\mu_b}$ and the probability of the current state being $(a,b+1)$ is $\frac{\mu_{b}}{\mu_a+\mu_b}$. Using this along with the  arguments given in the proof of Lemma \ref{lemma:reach_probability}, we find the recursive relation given in \eqref{eq:non_preemption_reccurence}.   
\end{proof}

In the next lemma, we derive the probability of the event $\theta_i$.
\begin{lemma}
    For $2 \leq i \leq N+1$, 
    \begin{equation}
    \label{non_preemption:theta_i}
    P(\theta_i) = \frac{\eta_i}{\sum_{i=2}^{N+1}\eta_i}, 
\end{equation}
where  $\eta_i$ is given in \eqref{eq:eta_i}.
\end{lemma}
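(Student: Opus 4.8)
The plan is to read this lemma as a Bayes-rule normalization of the quantity $\eta_i$ in \eqref{eq:eta_i}, so that the substantive work lies in deriving the expression for $\eta_i$ and the normalization is routine. By construction, $\eta_i$ is the probability of the joint event that the packet of interest enters server~$1$ with server~$i$ as its nearest occupied downstream server \emph{and} is subsequently delivered to the monitor. First I would argue that this probability factors as $\eta_i = q_i\,{\rm P}(i,1)$, where $q_i$ is the probability that a packet, on entering server~$1$, finds configuration $\theta_i$ (equivalently, lands in state $(i,1)$), and ${\rm P}(i,1)$ is the delivery probability from state $(i,1)$ supplied by Lemma~\ref{lemma5}. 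The factorization is legitimate because, by the memorylessness of every service process, once the packet of interest occupies state $(i,1)$ its future --- in particular whether it is ever dropped --- depends only on $(i,1)$ and not on the history that produced that configuration.

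Next I would compute $q_i$. The key observation is that immediately after \emph{any} departure from server~$1$ the system is in state $(2,0)$: the packet that just left server~$1$ either entered server~$2$ or was blocked by it, so server~$2$ is busy, while the occupancies of servers $3,\dots,N$ do not enter the state label. Server~$1$ then stays idle for an $\mathrm{Exp}(\lambda)$-distributed time (the residual inter-arrival time), during which --- treating the source as server~$0$ with rate $\mu_0=\lambda$ --- the state climbs $(2,0)\to(3,0)\to\cdots$, the transition out of $(a,0)$ being a service completion at server~$a$ with probability $\mu_a/(\mu_a+\lambda)$ and, with probability $\lambda/(\mu_a+\lambda)$, the arrival of a new packet that ends the idle period and realizes $\theta_a$. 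Unwinding this chain writes $q_i$ as the product of the competition probabilities $\mu_j/(\mu_j+\lambda)$ for $j=2,\dots,i-1$ with the arrival probability $\lambda/(\mu_i+\lambda)$ (using $\mu_{N+1}=0$, so that $\theta_{N+1}$ absorbs the remaining probability); multiplying by ${\rm P}(i,1)$ reproduces \eqref{eq:eta_i}.

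Finally the normalization: the events $\theta_2,\dots,\theta_{N+1}$ partition the configurations a packet can encounter on entering server~$1$, so $\sum_{i=2}^{N+1}\eta_i$ is the probability that a packet both enters server~$1$ and is later delivered. Since the packet of interest is by definition one that is delivered, conditioning on that event gives ${\rm P}(\theta_i)=\frac{\eta_i}{\sum_{j=2}^{N+1}\eta_j}$, which is the claim. I expect the only delicate point to be the justification that $(a,b)$ is a genuine sufficient statistic: both the identification of the post-departure state as $(2,0)$ and the factorization $\eta_i=q_i\,{\rm P}(i,1)$ rely on the fact --- established when the state was introduced --- that a completion at server~$a$ always moves the chain to $(a+1,\cdot)$ whether or not server~$a+1$ was already occupied, so the unobserved downstream occupancies never affect the dynamics; once that is granted, everything else is bookkeeping.
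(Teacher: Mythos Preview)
Your proposal is correct and follows essentially the same approach as the paper: factor $\eta_i$ into the probability of landing in state $(i,1)$ times the delivery probability ${\rm P}(i,1)$ from Lemma~\ref{lemma5}, derive the first factor by the race argument $\prod_{j=2}^{i-1}\frac{\mu_j}{\mu_j+\lambda}\cdot\frac{\lambda}{\mu_i+\lambda}$, and then normalize. Your observation that every departure from server~$1$ resets the chain to $(2,0)$, together with the sufficient-statistic remark that a completion at server~$a$ always advances the label to $a+1$ regardless of downstream occupancy, makes explicit a regeneration the paper leaves implicit, but the substance is identical.
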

\begin{proof}
The analysis of the probability of $\theta_i$  depends on  the state of the system  when the packet of interest arrives in server 1 and the probability of its successful delivery from that state.

The probability of successful delivery from server $j$ without occurring a new arrival is $\frac{\mu_j}{\mu_j+\lambda}$. Besides, the probability of new arrival in state $(i,0)$ is  $\frac{\lambda}{\lambda + \mu_i}$. Using this, we can determine the probability that the update arriving in state $(i,1)$ gets successfully delivered as  
 \begin{equation}
     \eta_i = \prod_{j=2}^{i-1} \frac{\mu_j}{\mu_j + \lambda}  \frac{\lambda}{\lambda+\mu_i}{\rm P}(i,1),\label{eq:eta_i}
 \end{equation}
 where ${\rm P}(i,1)$ can be obtained using Lemma \ref{lemma5} and  $\mu_{N+1}=0$. Further, normalizing $\eta_i$ gives the probability of $\theta_i$. 
\end{proof}

Interestingly, it can be noted that inter-departure $Y_k$'s are equal in distribution under preemption and non-preemption policies which is mainly because the preemption under the memoryless service process essentially replaces the older update with the new one without affecting the remaining service time statistics. Thus, the moments of $Y_k$ can be determined using \eqref{lemma:Y_k_first_moment} and \eqref{lemma:Y_k_second_moment}. Now, in the following, we determine the moments of $T_k$ conditioned on $\theta_i$. For this, we introduce the {\em reach time} for non-preemption as the average time required for the successful delivery of the packet of interest in server $b$ and denote  it as ${\rm T}(a,b)$. 
\begin{lemma}
\label{non_preemption:lemma:reach_time}
    Reach time for state $(a,b)$ can be obtained by recursively evaluating \eqref{non_preemption:reach_time} given at the top of the this page.
\end{lemma}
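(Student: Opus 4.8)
The plan is to mirror the argument used for the preemptive reach time in Lemma~\ref{lemma:reach_time}, conditioning on the first transition out of state $(a,b)$ and using the memoryless property to keep things Markovian. First I would set up the two base conditions. If $b = N+1$ then the packet of interest has already been delivered to the monitor (recall the monitor is server $N+1$ and $\mu_{N+1}=0$), so no further time is needed and ${\rm T}(a,b)=0$; likewise if $p_1=p_2=0$ the state is absorbing in the sense that successful delivery from $(a,b)$ has probability zero, so the conditional reach time is vacuously $0$. The second, more delicate base condition is $a=N+1$: here there is no server ahead of $b$ that can drop the packet of interest, so the packet in server $b$ will certainly be forwarded; the time spent in server $b$ is just an $\mathrm{Exp}(\mu_b)$ service (mean $1/\mu_b$), after which the state becomes $(a,b+1)=(N+1,b+1)$, giving ${\rm T}(a,b) = \tfrac{1}{\mu_b} + {\rm T}(a,b+1)$.

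For the recursive step with $b < a \le N$, I would condition on which of the two competing exponential clocks fires first: server $a$'s clock (rate $\mu_a$) or server $b$'s clock (rate $\mu_b$). As in Lemma~\ref{lemma:one_step_transition_probability}, with probability $\tfrac{\mu_a}{\mu_a+\mu_b}$ server $a$ completes first and the state moves to $(a+1,b)$, and with probability $\tfrac{\mu_b}{\mu_a+\mu_b}$ server $b$ completes first, moving to $(a,b+1)$. In either case, by the same computation as in \eqref{eq:reach_time_one_step}, the \emph{time} spent in state $(a,b)$ before the transition is $\mathrm{Exp}(\mu_a+\mu_b)$ in distribution, hence has mean $\tfrac{1}{\mu_a+\mu_b}$, independently of which transition occurred. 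Now the key point is that ${\rm T}(a,b)$ is defined as an expectation \emph{conditioned on} eventual successful delivery; so the probabilities with which we weight the two successor branches must be the posterior probabilities given success, namely $\tfrac{p_1}{p_1+p_2}$ and $\tfrac{p_2}{p_1+p_2}$ where $p_1,p_2$ are the (unconditional) probabilities of reaching delivery through $(a+1,b)$ and $(a,b+1)$ respectively — exactly the quantities $p_1 = \tfrac{\mu_a}{\mu_a+\mu_b}{\rm P}(a+1,b)$ and $p_2 = \tfrac{\mu_b}{\mu_a+\mu_b}{\rm P}(a,b+1)$ supplied by Lemma~\ref{lemma5}. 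Since the sojourn time in $(a,b)$ is independent of the branch taken, its contribution $\tfrac{1}{\mu_a+\mu_b}$ is added unconditionally, and the future contributions ${\rm T}(a+1,b)$, ${\rm T}(a,b+1)$ enter weighted by the posterior branch probabilities, yielding \eqref{non_preemption:reach_time}.

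The main obstacle, and the step I would be most careful about, is justifying the conditioning bookkeeping: one must check that conditioning on the \emph{global} event of successful delivery factors correctly into a fresh conditioning at each state, i.e. that the process restricted to delivery-successful trajectories is still Markov with the stated one-step posterior weights, and that the sojourn time in the current state remains $\mathrm{Exp}(\mu_a+\mu_b)$ even after this conditioning. This holds because, given the identity of the first transition, the sojourn time is conditionally independent of all future randomness, so conditioning further on success (which depends only on the future given that transition) does not alter the sojourn law — but this independence is precisely what needs to be spelled out. Everything else is a routine unrolling of the recursion identical in spirit to the proof of Lemma~\ref{lemma:reach_time}, with the only structural novelty being the extra base case $a=N+1$ that reflects the ``no server ahead to drop the packet'' feature peculiar to the non-preemptive discipline.
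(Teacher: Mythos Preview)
Your proposal is correct and follows essentially the same approach as the paper: explain the two base conditions ($b=N+1$ and $a=N+1$) in terms of what they mean for the packet of interest, and then derive the recursion by replaying the argument of Lemma~\ref{lemma:reach_time} with the success-conditioned branch weights $p_1,p_2$ now supplied by Lemma~\ref{lemma5}. In fact you spell out the conditioning bookkeeping more carefully than the paper does, which simply says ``following the steps given in the proof of Lemma~\ref{lemma:reach_time}.''
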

\begin{proof}
As $b=N+1$  implies that the update is delivered and $a=N+1$  implies that  there is no update that can potentially drop the update in server $b$, we obtain the base conditions for the recurrence relation.  Further, following the steps given in the proof of Lemma \ref{lemma:reach_time}, we obtain the recursive relation for the reach time as given in \eqref{non_preemption:reach_time}.
\end{proof}

Given that the update arrives in state $(i,1)$, its mean service time can be obtained using the reach time as $\mathbb E[T_k|\theta_i] = {\rm T}(i,1)$. Therefore, we obtain the mean of service time $T_k$ as
\setcounter{equation}{28}
\begin{equation}
    \label{non_preemption:t_k}
    \bar{\rm T}_1 = \sum\nolimits_{i=2}^{N+1} \mathbb {\rm T}(i,1){\rm P}(\theta_i),
\end{equation}
where ${\rm P}(\theta_i)$ is given in Lemma \ref{lemma5}.
Using the above results, we can obtain the mean PAoI as given in the following theroem.
\begin{thm}
        The mean PAoI $\bar{\mathcal{A}}$ under non-preemptive policy can be evaluated by substituting \eqref{lemma:Y_k_first_moment} and \eqref{non_preemption:t_k} in \eqref{eq:mean_peak_age_defination}.
\end{thm}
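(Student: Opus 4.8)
The plan is to assemble the theorem directly from the definition of mean PAoI in \eqref{eq:mean_peak_age_defination} together with the two moment expressions established above. By \eqref{eq:mean_peak_age_defination} we have $\bar{\mathcal{A}} = \mathbb{E}[Y_k] + \mathbb{E}[T_{k-1}]$, so it suffices to verify that the first term equals $\bar{{\rm Y}}_1$ of \eqref{lemma:Y_k_first_moment} and the second equals $\bar{{\rm T}}_1$ of \eqref{non_preemption:t_k}.

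For the inter-departure term, I would invoke the observation made just before Lemma~\ref{non_preemption:lemma:reach_time}: under a memoryless service process, preemption only swaps the identity of the in-service packet without changing the residual service-time law, so in steady state the inter-departure time $Y_k$ is equal in distribution to its preemptive counterpart. Hence $\mathbb{E}[Y_k]$ is obtained by the same deconditioning over the events $\{\psi_i\}$ as in \eqref{lemma:Y_k_first_moment}, using the conditional moment \eqref{eq:yk|psi_i} and the probabilities \eqref{eq:psi_i}. I would state explicitly here that this is where stationarity of the update stream enters — namely, that $\mathbb{E}[Y_k]$ and $\mathbb{E}[T_{k-1}]$ do not depend on the index $k$ — rather than leave it implicit.

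For the service-time term, I would use stationarity again to replace $\mathbb{E}[T_{k-1}]$ by $\mathbb{E}[T_k]$, and then condition on the partition $\{\theta_i : 2 \le i \le N+1\}$ of the state seen by the packet of interest upon entering server~$1$, whose probabilities are given by \eqref{non_preemption:theta_i}. On $\theta_i$ the packet starts in state $(i,1)$ and its sojourn until delivery to the monitor is exactly the non-preemptive reach time ${\rm T}(i,1)$ of Lemma~\ref{non_preemption:lemma:reach_time}, so $\mathbb{E}[T_k \mid \theta_i] = {\rm T}(i,1)$; the law of total expectation then yields \eqref{non_preemption:t_k}. Substituting both expressions into \eqref{eq:mean_peak_age_defination} finishes the argument.

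The main obstacle is not computational but a bookkeeping one: one must check that $\{\theta_i\}$ genuinely partitions the relevant state space — every arriving packet that is not immediately dropped sees a nearest busy downstream server $i \in \{2,\dots,N+1\}$, with $i = N+1$ encoding the case where all downstream servers are idle — and that the reach-time recursion \eqref{non_preemption:reach_time} correctly absorbs the possibility of the packet of interest being dropped. In other words, one has to be careful that conditioning on $\theta_i$ coincides with conditioning on successful delivery starting from state $(i,1)$, which is precisely what makes the identity $\mathbb{E}[T_k \mid \theta_i] = {\rm T}(i,1)$ legitimate. Once this is pinned down, the theorem follows by a single substitution.
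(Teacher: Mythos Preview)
Your proposal is correct and matches the paper's approach: the paper presents this theorem as an immediate consequence of substituting the previously derived $\bar{{\rm Y}}_1$ from \eqref{lemma:Y_k_first_moment} and $\bar{{\rm T}}_1$ from \eqref{non_preemption:t_k} into the definition \eqref{eq:mean_peak_age_defination}, with no further argument. Your write-up actually goes beyond the paper by spelling out the stationarity justification for replacing $\mathbb{E}[T_{k-1}]$ by $\mathbb{E}[T_k]$ and by flagging the need to verify that $\{\theta_i\}$ partitions the conditioning event of successful delivery, points the paper leaves implicit.
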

\vspace{-0.2cm}
\begin{restatable}{cor}{TwoSerNonPreempt}\label{L_th}
For N = 2 servers, the mean PAoI under the non-preemptive policy is 
    \begin{align*}
\bar{\mathcal{A}} &=  \frac{1}{\lambda} + \frac{2}{\mu_1} + \frac{2}{\mu_2} + \frac{1}{\mu_1 + \mu_2} - \frac{2}{\lambda + \mu_1 + \mu_2}.
\end{align*}
\end{restatable}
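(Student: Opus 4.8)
The plan is to specialize the non-preemptive recursive framework of Section~\ref{section:Age_under_Non_Preemptive_Queues_in_Tandem} to $N=2$ and then plug into \eqref{eq:mean_peak_age_defination}, which reads $\bar{\mathcal A}=\mathbb E[Y_k]+\mathbb E[T_{k-1}]=\bar{\rm Y}_1+\bar{\rm T}_1$ in stationarity. Since the inter-departure time $Y_k$ has the same law under both policies, $\bar{\rm Y}_1$ is obtained from the preemptive-side formulas, while $\bar{\rm T}_1$ comes from \eqref{non_preemption:t_k}. Thus the argument reduces to two short, independent evaluations of the recursions followed by one addition.

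For the $Y_k$ part, with $N=2$ the only relevant events are $\psi_0$ and $\psi_1$. Running the recursion of Lemma~\ref{lemma:reach_probability} (with $\mu_0=\lambda$) gives ${\rm P}(1,0,2,0)=\tfrac{\mu_1}{\lambda+\mu_1}$ and ${\rm P}(1,0,2,1)=\tfrac{\mu_1}{\lambda+\mu_1}\cdot\tfrac{\lambda}{\lambda+\mu_2}$, the remaining branches vanishing through the $a\le b$ base case. Substituting into $\zeta_i={\rm P}(1,0,2,i)\tfrac{\mu_2}{\mu_2+\mu_i}$ and normalizing via \eqref{eq:psi_i} yields ${\rm P}(\psi_0)=\tfrac{\mu_1+\mu_2}{\lambda+\mu_1+\mu_2}$ and ${\rm P}(\psi_1)=\tfrac{\lambda}{\lambda+\mu_1+\mu_2}$. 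With $\mathbb E[Y_k\mid\psi_0]=\tfrac1\lambda+\tfrac1{\mu_1}+\tfrac1{\mu_2}$ and $\mathbb E[Y_k\mid\psi_1]=\tfrac1{\mu_1}+\tfrac1{\mu_2}$ from \eqref{eq:yk|psi_i}, deconditioning via \eqref{lemma:Y_k_first_moment} and using $\tfrac1\lambda\cdot\tfrac{\mu_1+\mu_2}{\lambda+\mu_1+\mu_2}=\tfrac1\lambda-\tfrac1{\lambda+\mu_1+\mu_2}$ gives $\bar{\rm Y}_1=\tfrac1\lambda+\tfrac1{\mu_1}+\tfrac1{\mu_2}-\tfrac1{\lambda+\mu_1+\mu_2}$.

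For the $T_k$ part, with $N=2$ only $\theta_2$ and $\theta_3$ occur, and here $\mu_3=\mu_{N+1}=0$. Lemma~\ref{lemma5} gives ${\rm P}(2,1)=\tfrac{\mu_2}{\mu_1+\mu_2}$ and ${\rm P}(3,1)=1$; then \eqref{eq:eta_i} gives $\eta_2=\tfrac{\lambda}{\lambda+\mu_2}\cdot\tfrac{\mu_2}{\mu_1+\mu_2}$ and $\eta_3=\tfrac{\mu_2}{\lambda+\mu_2}$ (using $\tfrac{\lambda}{\lambda+\mu_3}=1$), so \eqref{non_preemption:theta_i} yields ${\rm P}(\theta_2)=\tfrac{\lambda}{\lambda+\mu_1+\mu_2}$ and ${\rm P}(\theta_3)=\tfrac{\mu_1+\mu_2}{\lambda+\mu_1+\mu_2}$. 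The reach times from \eqref{non_preemption:reach_time} are ${\rm T}(3,1)=\tfrac1{\mu_1}+\tfrac1{\mu_2}$ via the $a=N+1$ base branch, and ${\rm T}(2,1)=\tfrac1{\mu_1+\mu_2}+{\rm T}(3,1)$ since the $(2,2)$ branch carries zero probability. Deconditioning via \eqref{non_preemption:t_k} then gives $\bar{\rm T}_1=\tfrac1{\mu_1}+\tfrac1{\mu_2}+\tfrac1{\mu_1+\mu_2}-\tfrac1{\lambda+\mu_1+\mu_2}$, and adding $\bar{\rm Y}_1$ produces the claimed formula. The derivation is mechanical; the only care needed is the degenerate-rate bookkeeping ($\mu_{N+1}=0$, and which base cases of Lemmas~\ref{lemma:reach_probability},~\ref{lemma5},~\ref{non_preemption:lemma:reach_time} fire for such small $N$) and spotting the partial-fraction identity $\tfrac{\lambda}{(\mu_1+\mu_2)(\lambda+\mu_1+\mu_2)}=\tfrac1{\mu_1+\mu_2}-\tfrac1{\lambda+\mu_1+\mu_2}$ that collapses the deconditioned sums into the stated closed form, so I foresee no genuine obstacle.
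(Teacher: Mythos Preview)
Your proposal is correct and follows essentially the same route as the paper: it specializes the non-preemptive recursion to $N=2$, computes ${\rm P}(\theta_2),{\rm P}(\theta_3)$ and the conditional reach times ${\rm T}(2,1),{\rm T}(3,1)$, and combines them with the (policy-independent) $\bar{\rm Y}_1$ already obtained in the preemptive corollary. The only difference is cosmetic --- you display more of the intermediate recursion values (${\rm P}(1,0,2,i)$, $\eta_i$, the partial-fraction step) than the paper does, but the logic and the underlying lemmas invoked are identical.
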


\section{Numerical Results and Discussion}


To compare mean PAoI between the preemptive and non-preemptive policies, it is sufficient to analyse the behaviour of mean service time as the mean inter-departure time is the same for the two policies. In Fig. \ref{fig:result1}, we show the behaviour of mean service time with respect to the arrival rate $\lambda$ for $N=3,~4,~\text{and}~5$ servers. It can be observed that the mean service time drops with $\lambda$ under preemptive case as expected. However, the trend is opposite for the non-preemptive case which might be attributed to the fact of increased packet drop rate at higher $\lambda$ under non-preemption.

\begin{figure}[H]
\centering
\includegraphics[width=.35\textwidth]{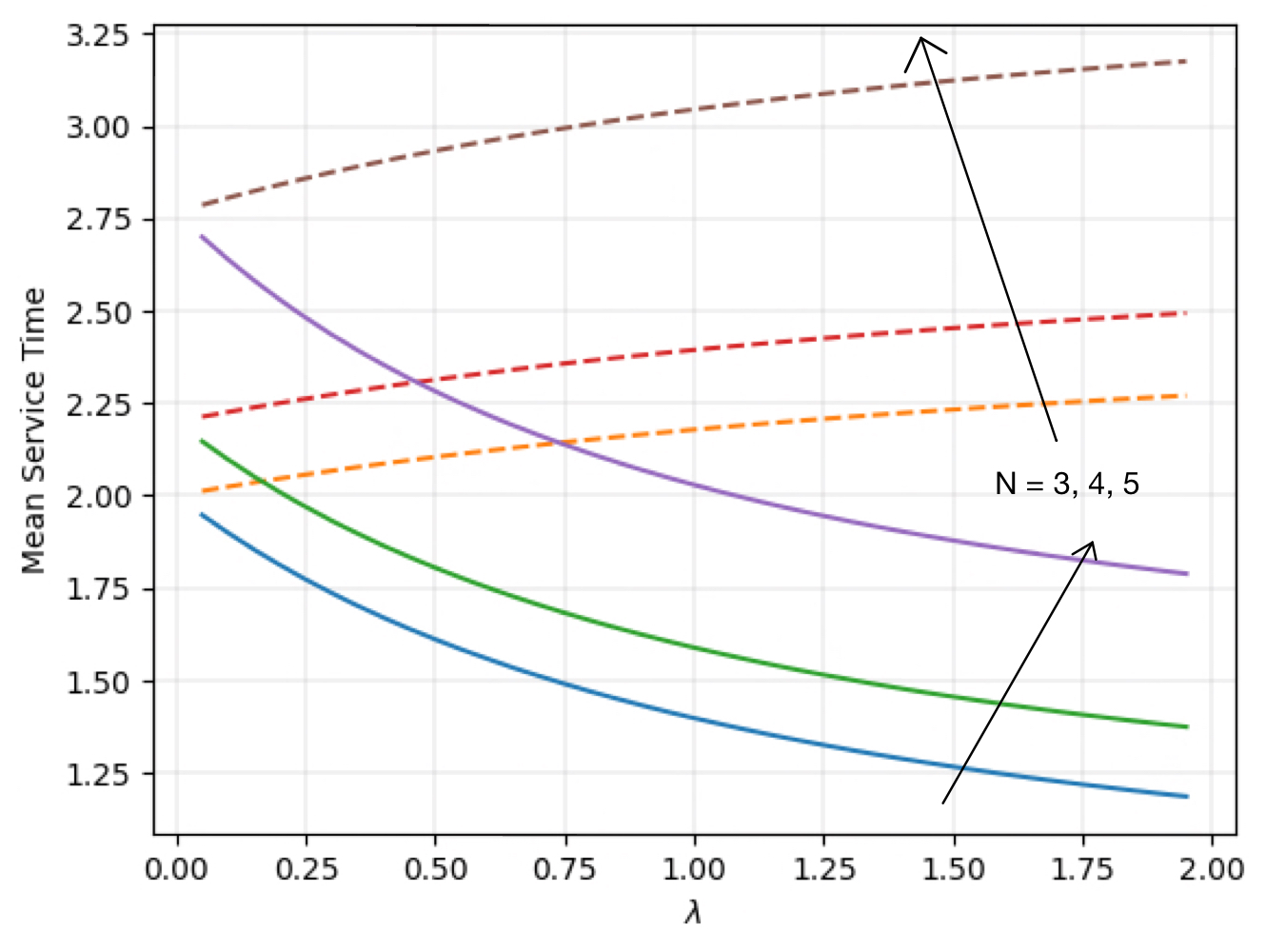}\vspace{-.37cm}
\caption{Mean service time $\bar{\rm T}_1$ vs $\lambda$ for $N=3,4,5$ servers with service times $\mu_i=1.5$ for $i=1,\dots,N-1$ and $\mu_N=1.5,~5,~ \text{and}~10$ for $N=3, 4,~\text{and}~5$, respectively. The solid and dashed lines indicate the mean service times under preemption and non-preemption, respectively.}
\label{fig:result1}\vspace{-2mm}
\end{figure}

\section{Conclusion}
 In this paper, we analyzed the mean peak age performance under the tandem of queues for exponential arrival  and  service processes with different rates. For a system with $N$ unit capacity queues, we developed a recursive analytical framework to evaluate the mean PAoI under the preemptive and non-preemptive policies. We also showed that our recursive framework can be used to compute the mean AoI under the preemptive policy. Our numerical analysis shows that with the increase of arrival rate, the mean service time decreases in preemptive discipline whereas it increases in the non-preemptive policy.
\newpage
\bibliographystyle{IEEEtran} 
\bibliography{ISIT24}

 

\clearpage
\section*{Appendix}

\TwoSerPreempt*
\begin{proof}
Using \eqref{eq:psi_i}, we can evaluate the probabilities of $\psi_o$ and $\psi_1$ as
 \begin{equation*}
 {\rm P}(\psi_0) =  \frac{\mu_1 +\mu_2}{\lambda + \mu_1 + \mu_2}\text{~~and~~} {\rm P}(\psi_1) = \frac{\lambda}{\lambda + \mu_1 + \mu_2}. 
 \end{equation*}
Next, using \eqref{lemma:Y_k_first_moment}, we can the obtain the conditional moments of $Y_k$ as
\begin{align*}
    \mathbb E[Y_k|\psi_0]&=\frac{1}{\lambda}+\frac{1}{\mu_1}+\frac{1}{\mu_2},  \\
    \mathbb E[Y_k|\psi_1]&=\frac{1}{\mu_1}+\frac{1}{\mu_2}.
 \end{align*}
Therefore, the expected inter-departure time is 
\setcounter{equation}{29}
\begin{align}
\label{eq:EIDT_2Ser}
    \overline{Y}_1 = \frac{1}{\lambda}  + \frac{1}{\mu_1} + \frac{1}{\mu_2} - \frac{1}{\lambda+\mu_1+\mu_2}.
\end{align}
 
 Further, using Lemma \eqref{lemma:tk}, we can the obtain the conditional means of $T_k$ as
 \begin{align*}
  \mathbb E[T_k|\psi_0] &=  \frac{1}{\lambda + \mu_1} + \frac{1}{\lambda + \mu_2}, \\   
  \mathbb E[T_k|\psi_1] &=  \frac{1}{\lambda + \mu_1} + \frac{1}{\lambda + \mu_2} + \frac{1}{\mu_1 + \mu_2}.   
 \end{align*}
 
 Plugging the above results together in \eqref{eq:mean_peak_age_defination} and performing some algebraic calculations, we obtain the mean PAoI
\begin{align*}
\bar{\mathcal{A}} &=  \frac{1}{\lambda} + \frac{1}{\mu_1} + \frac{1}{\mu_2} + \frac{1}{\lambda + \mu_1} + \frac{1}{\lambda + \mu_2} + \frac{1}{\mu_1 + \mu_2} \nonumber\\ 
&~~- \frac{2}{\lambda + \mu_1 + \mu_2},
\end{align*}
as desired. 
\end{proof}
\newpage
\TwoSerNonPreempt*
\begin{proof}
Using \eqref{non_preemption:theta_i}, we can evaluate the probabilities of $\theta_2$ and $\theta_3$ as
 \begin{equation*}
 {\rm P}(\theta_2) =  \frac{\lambda}{\lambda + \mu_1 + \mu_2}\text{~~and~~} {\rm P}(\theta_3) = \frac{\mu_1 +\mu_2}{\lambda + \mu_1 + \mu_2}. 
 \end{equation*}

Using Lemma \eqref{non_preemption:lemma:reach_time}, we can the obtain the conditional means of $T_k$ as
 \begin{align*}
  \mathbb E[T_k|\theta_2] &=  \frac{1}{\mu_1} + \frac{1}{\mu_2} + \frac{1}{\mu_1 + \mu_2}, \\
  \mathbb E[T_k|\theta_3] &=  \frac{1}{\mu_1} + \frac{1}{\mu_2}.
 \end{align*}
 
 Plugging the above results together with \eqref{eq:EIDT_2Ser} in \eqref{eq:mean_peak_age_defination} and performing some algebraic calculations, we obtain the mean PAoI
\begin{align*}
\bar{\mathcal{A}} &=  \frac{1}{\lambda} + \frac{2}{\mu_1} + \frac{2}{\mu_2} + \frac{1}{\mu_1 + \mu_2} - \frac{2}{\lambda + \mu_1 + \mu_2}.
\end{align*}
\end{proof}

\end{document}